\documentclass[a4paper,10pt,fleqn]{article}
\usepackage[T1]{fontenc}
\usepackage{graphicx}
\usepackage{amsmath,amsfonts,amssymb,mathtools,bm}
\usepackage{amsthm}
\usepackage{booktabs,makecell,multirow,array,colortbl,dcolumn}
\usepackage{threeparttable,tabularx,longtable,adjustbox}
\usepackage{subcaption,float,placeins,pdflscape}
\usepackage{enumitem,microtype}
\usepackage[numbers,sort&compress]{natbib}
\usepackage[svgnames,dvipsnames]{xcolor}
\usepackage[colorlinks=true,linkcolor=DarkSlateGrey,urlcolor=DarkSlateGrey,citecolor=DarkSlateGrey]{hyperref}
\usepackage[nameinlink,noabbrev]{cleveref}
\usepackage{pgffor}
\usepackage[margin=1in]{geometry}
\graphicspath{{Figures/}{Supplementary_Figures/}}
\theoremstyle{plain}
\newtheorem{theorem}{Theorem}[section]

\theoremstyle{definition}
\newtheorem{assumption}{Assumption}[section]

\crefname{assumption}{Assumption}{Assumptions}
\Crefname{assumption}{Assumption}{Assumptions}
\newcommand{\R}{\mathbb{R}}
\newcommand{\E}{\mathbb{E}}

\newcommand{\norm}[1]{\left\lVert#1\right\rVert}
\newcommand{\abs}[1]{\left\lvert#1\right\rvert}
\newcommand{\ind}{\mathbb{I}}
\newcommand{\trans}{^{\mathsf T}}

\newcommand{\med}{\mathop{\rm Med}}

\newenvironment{keywords}{\par\noindent\textbf{Keywords:} }{\par}
\newcommand{\sep}{\unskip;\ }
\begin{document}
\title{Calibration of Time-Varying SEIR Models}
\author{\parbox{0.90\textwidth}{\centering
Yisa Abolade\textsuperscript{} \quad \textsuperscript{} \quad \textsuperscript{}\\[0.5em]
{\small \textsuperscript{}Department of Mathematics and Statistics, Georgia State University, Atlanta, Georgia, USA\\
\texttt{yabolade1@gsu.edu} \quad \texttt{} \quad \texttt{}\\[0.4em]
\textsuperscript{}}}}
\date{}
\maketitle
\begin{abstract}
Mechanistic epidemic calibration can be sensitive to short-lived reporting anomalies that give a few observations high leverage under least squares (LSQ). We develop direct forward-solver least absolute deviations (LAD) calibration for time-varying susceptible--exposed--infectious--removed models fitted to reporting-interval incidence, with finite-horizon regularity and large-sample results. A phase-aware Monte Carlo study compares LAD and LSQ across three transmission drivers, four epidemic phases, five reporting mechanisms, and four forecast horizons. All 120,000 primary fits converged. LAD was favored in 56.7\% of mean-absolute-error comparisons and 57.9\% of weighted-interval-score comparisons overall; under isolated spikes, backlog release, or temporary underreporting, these proportions rose to 73.6\% and 72.2\%, with strong gains near the epidemic peak. Rolling-origin validation on four synthetic RAPIDD Ebola scenarios also identifies settings where LSQ is preferred, demonstrating the framework's ability to distinguish robustness gains from sustained-trend behavior. Multiple uncertainty and identifiability diagnostics, complete code and data, and an interactive application provide a reproducible computational workflow.
\end{abstract}
\begin{keywords}
least absolute deviations \sep robust estimation \sep ODE calibration \sep SEIR model \sep Monte Carlo \sep forecast validation \sep reproducibility
\end{keywords}

\section{Introduction}
\label{sec:introduction}
Ordinary differential equation models are used throughout medical research to translate biological mechanisms into longitudinal predictions. In infectious-disease surveillance, mechanistic epidemic models connect assumptions about infection, latency, removal, and intervention to observed case counts. Their inferential and forecasting value depends not only on the state equations, but also on the observation target, calibration criterion, transmission parameterization, numerical stability, and forecast evaluation design. These issues were prominent during the 2014--2015 West African Ebola epidemic, when delayed confirmation, incomplete reporting, behavioral adaptation, changing treatment capacity, and rapidly evolving control measures complicated real-time estimation and prediction \citep{chretien2015modeling,chowell2017perspectives}. More generally, epidemic forecasts are sensitive to aggregation, model structure, data quality, and departures from simple exponential growth \citep{chowell2016review,chowell2017primer}.

Least squares (LSQ) is a common calibration criterion for ordinary differential equation models and is well aligned with conditional-mean estimation and squared-error evaluation under approximately homoscedastic, light-tailed errors. Its quadratic loss gives large residuals substantial leverage, which is important in surveillance series containing isolated spikes, backlog releases, temporary underreporting with delayed release, duplicate removal, or serially correlated reporting error. Least absolute deviations (LAD), equivalently minimizing the sum of absolute deviations (SAD), increases linearly with residual magnitude and targets conditional-median behavior under the working additive-error model \citep{huber2009robust,koenker2005quantile}. This complementary estimand makes LAD especially attractive as a prespecified robustness analysis when short-lived reporting anomalies may otherwise dominate calibration.

Robust estimation for nonlinear ODE models has been developed through robust generalized profiling \citep{cao2011robust}, and Huber M-estimation with established asymptotic properties has also been studied \citep{qiu2016robust}. Building on this foundation, we develop a direct forward-solver LAD framework for sparsely observed reporting-interval incidence and evaluate it under matched predictive uncertainty. The central statistical questions are when LAD improves out-of-sample prediction, how the gain varies with epidemic phase and reporting mechanism, and how robust calibration interacts with transmission structure and forecast horizon. The contribution is a phase-aware, computationally reproducible framework that connects robust ODE calibration, predictive scoring, and practical identifiability.

Because LAD and LSQ target different estimands, rigorous comparison benefits from varying the forecast origin relative to epidemic phase and evaluating multiple forecast horizons. We therefore combine point scores with proper probabilistic scores, coverage, and interval width, and place mechanistic models alongside simple forecasting baselines. This design distinguishes robust-calibration gains from improvements attributable to transmission structure or favorable in-sample fit and yields a stringent out-of-sample assessment.

Time-varying transmission creates a second methodological opportunity. Treatment-unit capacity, case isolation, contact tracing, safe burial, risk perception, mobility, and changing contact behavior can all alter transmission. Flexible transmission functions can improve trajectory representation, while explicit parameterization and sensitivity diagnostics help preserve interpretable mechanistic structure. A representation $\beta(t)=\beta_0\{1+\alpha g(t;\vartheta)\}$ is structurally redundant when $g$ contains another unrestricted amplitude, because only the product of the amplitudes is identified. We use one amplitude parameter per driver, impose positive transmission over the full calibration and forecast horizon, and assess practical identifiability with sensitivity singular values, condition numbers, and profile objectives \citep{chowell2017primer,roosa2019identifiability}.

The empirical application uses four synthetic outbreaks released for the RAPIDD Ebola Forecasting Challenge. The challenge was generated from a spatially structured individual-based model representing transmission across Liberia and was designed to reproduce both epidemiological variability and the reporting uncertainty experienced during the West African epidemic \citep{ajelli2018rapiddmodel}. The scenarios differed in intervention trajectories, information availability, stochastic evolution, underreporting, and reporting reliability. Forecasting performance across the challenge depended strongly on the amount and quality of information and on forecast horizon \citep{viboud2018rapidd,pell2018phenomenological}.

We characterize the epidemic phases and reporting mechanisms in which changing from squared to absolute residual weighting delivers meaningful gains in SEIR calibration and short-horizon forecasting.

This study makes three integrated methodological contributions. First, we formulate forward-solver LAD calibration for ODE-implied reporting-interval incidence and state the regularity, consistency, and asymptotic-normality results needed to interpret it. Second, an ADEMP-structured, phase-aware simulation separates the roles of residual loss, epidemic phase, reporting mechanism, transmission structure, and forecast horizon. Third, rolling-origin RAPIDD validation evaluates mechanistic forecasts against simple baselines using point and proper probabilistic scores, multiple uncertainty procedures, and identifiability diagnostics. Together, these components make calibration loss an empirically testable part of the statistical analysis. The complete implementation is released as an installable R package with reproducible analysis code, automated validation, and an interactive application.

\section{Epidemic model and observation process}
\label{sec:model}

\subsection{Reporting intervals and interval incidence}
Let $0=\tau_0<\tau_1<\cdots<\tau_T$ be reporting times, and let $Y_j$ denote cases reported during $(\tau_{j-1},\tau_j]$. We use the working observation model
\begin{equation}
Y_j=\mu_j(\theta_0)+\varepsilon_j,\qquad j=1,\ldots,T,
\label{eq:obsmodel}
\end{equation}
where $\mu_j(\theta)$ is model-implied interval incidence. For LAD theory, $\med(\varepsilon_j)=0$ is the defining location condition. The simulation and predictive analyses additionally use count distributions on the nonnegative integer scale.

Let $S(t)$, $E(t)$, $I(t)$, and $R(t)$ denote susceptible, exposed, infectious, and removed individuals in a closed population of size $N$. Introduce $C(t)$ as cumulative progression from exposed to infectious. The augmented SEIR model is
\begin{align}
\dot S(t)&=-\beta(t;\theta)\frac{S(t)I(t)}{N}, \label{eq:S}\\
\dot E(t)&=\beta(t;\theta)\frac{S(t)I(t)}{N}-\sigma E(t), \label{eq:E}\\
\dot I(t)&=\sigma E(t)-\gamma I(t), \label{eq:I}\\
\dot R(t)&=\gamma I(t), \label{eq:R}\\
\dot C(t)&=\sigma E(t),\qquad C(0)=0. \label{eq:C}
\end{align}
Here $1/\sigma$ is the mean latent period and $1/\gamma$ is the mean infectious period. The observation map is
\begin{equation}
\mu_j(\theta)=C(\tau_j;\theta)-C(\tau_{j-1};\theta)
=\int_{\tau_{j-1}}^{\tau_j}\sigma E(u;\theta)\,du.
\label{eq:interval-incidence}
\end{equation}
Thus the fitted quantity is reporting-interval incidence. Daily simulations use one-day intervals; the RAPIDD analysis uses seven-day intervals while retaining daily ODE rates.

\subsection{Time-varying transmission families}
Three low-dimensional families are considered:
\begin{align}
\beta_{\mathrm{cos}}(t;\theta)&=\beta_0\{1+a\cos(\omega t)\}, \label{eq:beta-cos}\\
\beta_{\mathrm{exp}}(t;\theta)&=\beta_0\{1+a\exp(bt)\}, \label{eq:beta-exp}\\
\beta_{\mathrm{log}}(t;\theta)&=\beta_0\left[q+\frac{1-q}{1+\exp\{k(t-\tau)\}}\right]. \label{eq:beta-log}
\end{align}
The cosine family is a parsimonious phenomenological turning-point curve over the finite analysis window. The exponential family represents monotone amplification. The logistic-decline family represents gradual strengthening of control, with long-run transmission fraction $q\in(0,1)$, decline rate $k>0$, and midpoint $\tau$.

The parameter vectors are
\[
\theta_{\mathrm{cos}}=(\beta_0,\sigma,\gamma,a,\omega,E_0,I_0)\trans,
\quad
\theta_{\mathrm{exp}}=(\beta_0,\sigma,\gamma,a,b,E_0,I_0)\trans,
\]
\[
\theta_{\mathrm{log}}=(\beta_0,\sigma,\gamma,q,k,\tau,E_0,I_0)\trans,
\]
with $R_0$ fixed and $S_0=N-E_0-I_0-R_0$. Each family has a single scale parameter $\beta_0$ and no duplicated amplitude.

\subsection{Feasible set and sampled-output identifiability}
Let $\mathcal H\subset\R^p$ be compact. The feasible set requires positive $\beta_0$, $\sigma$, and $\gamma$, nonnegative initial states, $E_0+I_0+R_0<N$, and
\begin{equation}
0<\underline\beta\le \beta(t;\theta)\le\overline\beta<\infty,
\qquad 0\le t\le \tau_T+h_{\max}.
\label{eq:beta-positive}
\end{equation}
For the cosine family, $|a|<1$ is sufficient for positivity. Positivity and the configured upper bound are checked numerically over the finite analysis horizon for all families.

Define $M_T(\theta)=(\mu_1(\theta),\ldots,\mu_T(\theta))\trans$ and $J_T(\theta)=\partial M_T(\theta)/\partial\theta\trans$. A parameter is locally sampled-output identifiable if equality of $M_T$ in a neighborhood implies equality of the parameter. Full column rank of $J_T(\theta_0)$ is a standard local sufficient condition, but numerical rank alone is inadequate when singular values are highly dispersed. We therefore report scaled singular values, effective numerical rank, and condition number, and we use profile objectives and fixed-rate sensitivity analyses to evaluate practical identifiability.

\section{Estimation, computation, and forecast evaluation}
\label{sec:estimation}

\subsection{Least absolute deviations and least squares}
Let $r_j(\theta)=Y_j-\mu_j(\theta)$. The two calibration criteria are
\begin{align}
L_T^{\mathrm{LAD}}(\theta)&=\sum_{j=1}^T\abs{r_j(\theta)}, \label{eq:lad}\\
L_T^{\mathrm{LSQ}}(\theta)&=\sum_{j=1}^T r_j(\theta)^2. \label{eq:lsq}
\end{align}
LAD is the standard name used throughout; it is equivalent to minimizing the sum of absolute deviations. Under the working location model, LAD targets conditional-median behavior and LSQ targets conditional-mean behavior. Their expected advantages therefore depend on the score. The evaluation metrics used in the simulation and RAPIDD analyses, together with their statistical targets and expected alignment with LAD or LSQ, are summarized in \Cref{tab:metric-targets}.

\begin{table}[t]
\centering
\caption{Evaluation metrics, statistical targets, and interpretation.}
\label{tab:metric-targets}
\small
\begin{tabularx}{\textwidth}{@{}lXXX@{}}
\toprule
Metric & Statistical target & Interpretation & Expected alignment\\
\midrule
MSE & Conditional mean / squared-error accuracy & Penalizes large point errors quadratically & LSQ is aligned with this criterion\\
MAE & Conditional median / absolute-error accuracy & Penalizes point errors linearly & LAD is aligned with this criterion\\
WIS & Full predictive distribution & Balances interval calibration and sharpness across levels & Depends on point fit and uncertainty method\\
Coverage & Calibration of predictive intervals & Fraction of observations contained in the stated interval & Depends on uncertainty calibration\\
Interval width & Sharpness of predictive intervals & Average width of the stated interval & Narrower is better only when coverage remains adequate\\
\bottomrule
\end{tabularx}
\end{table}

\subsection{Constrained multistart computation}
The state equations are integrated with LSODA \citep{hairer1996ode,soetaert2010desolve}, and interval incidence is obtained by differencing $C$ at adjacent reporting times. Bounded optimization uses L-BFGS-B \citep{byrd1995lbfgsb}. Within every paired simulation comparison, LAD and LSQ receive the same feasible starting values. The first start is biologically motivated and clipped to the configured bounds; additional starts are generated independently over the bounds, with positive parameters sampled on a logarithmic scale. Candidate vectors that imply invalid initial states, numerical integration failure, nonfinite trajectories, or transmission outside the admissible range are excluded. The primary fits use 12 starts and at most 3,000 optimizer iterations.

\subsection{Predictive distributions}
For the simulation study, each fitted trajectory is converted to a predictive distribution using a negative-binomial observation model. Dispersion is estimated from the calibration residuals, and 500 conditional predictive draws are generated for each fitted model and target time. This construction places LAD and LSQ forecasts on the same count scale while allowing variance to increase with the predicted mean.

For rolling-origin RAPIDD validation, the same negative-binomial predictive construction is used for all SEIR specifications and both baselines, with 1,000 draws per origin. The naive baseline carries the last observed count forward. The recent-window exponential baseline fits
\[
\log(Y_j+0.5)=\alpha+\rho j+e_j
\]
over the latest six observations and extrapolates the fitted trend, with $\rho$ constrained to $[-0.75,0.75]$ for numerical stability. Both baselines use calibration residuals to estimate predictive dispersion.

The fixed-origin sensitivity analysis compares four refitted uncertainty procedures. The IID residual bootstrap resamples median-centered residuals; the wild bootstrap multiplies residuals by independent random signs; the moving-block bootstrap samples contiguous residual blocks of length four; and the negative-binomial parametric bootstrap generates counts around the fitted mean with estimated dispersion. Each bootstrap data set is refitted from a warm start plus two additional feasible starts. The IID and negative-binomial procedures use 2,000 replications per model; the wild and moving-block procedures use 1,000. Percentile parameter intervals use successful refits only \citep{efron1993bootstrap,wu1986wild,kunsch1989jackknife}.

\subsection{Forecast scores and diagnostics}
For a central $(1-\alpha)$ prediction interval $[L_{j,\alpha},U_{j,\alpha}]$, the interval score is
\begin{equation}
\mathrm{IS}_{\alpha}(Y_j)=(U_{j,\alpha}-L_{j,\alpha})+
\frac{2}{\alpha}(L_{j,\alpha}-Y_j)_++
\frac{2}{\alpha}(Y_j-U_{j,\alpha})_+.
\label{eq:intervalscore}
\end{equation}
Given central intervals indexed by $\alpha_1,\ldots,\alpha_K$ and predictive median $m_j$, the weighted interval score is
\begin{equation}
\mathrm{WIS}(Y_j)=\frac{1}{K+1/2}\left\{
\frac12\abs{Y_j-m_j}+\sum_{k=1}^K\frac{\alpha_k}{2}\mathrm{IS}_{\alpha_k}(Y_j)
\right\}.
\label{eq:wis}
\end{equation}
WIS is proper and balances sharpness with penalties for observations outside the predictive intervals \citep{gneiting2007proper,bracher2021wis}. We report MSE, MAE, WIS, empirical coverage, and mean interval width at the 50\%, 80\%, 90\%, and 95\% levels. Randomized probability integral transform (PIT) histograms provide a distributional diagnostic for discrete forecasts \citep{czado2009predictive}.

\section{Large-sample theory for LAD calibration}
\label{sec:theory}
Let $x=(S,E,I,R)\trans$ and $\Delta_N=\{x\in\R_+^4:\mathbf 1\trans x=N\}$. Complete proofs are in the supplementary material.

\begin{theorem}[Finite-horizon global well-posedness]
\label{thm:wellposed}
Suppose $x(0)\in\Delta_N$, $\sigma>0$, $\gamma>0$, and $t\mapsto\beta(t;\theta)$ is continuous and nonnegative on $[0,\tau_T+h_{\max}]$. Then \eqref{eq:S}--\eqref{eq:R} admit a unique solution on the entire interval. Moreover, $x(t)\in\Delta_N$ for every $t$, and $C(t)$ is nondecreasing and finite.
\end{theorem}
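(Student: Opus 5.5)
The plan is to combine local existence and uniqueness from Picard--Lindelöf with a priori invariance of the simplex $\Delta_N$, which rules out blow-up and so extends the local solution to the full horizon. Write the system \eqref{eq:S}--\eqref{eq:R} as $\dot x=F(t,x)$ with
\[
F(t,x)=\Bigl(-\beta(t;\theta)\tfrac{SI}{N},\ \beta(t;\theta)\tfrac{SI}{N}-\sigma E,\ \sigma E-\gamma I,\ \gamma I\Bigr)\trans .
\]
Since $\beta$ is continuous on the compact interval $[0,\tau_T+h_{\max}]$, it is bounded there, say by $\overline\beta$. The map $F$ is continuous in $t$, and it is polynomial, hence locally Lipschitz, in $x$ uniformly in $t$. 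Picard--Lindelöf then gives a unique maximal solution on some interval $[0,t^*)$. If $t^*\le\tau_T+h_{\max}$, then $\norm{x(t)}\to\infty$ as $t\uparrow t^*$. It therefore suffices to show that $x(t)$ stays in the compact set $\Delta_N$.

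\textbf{Step 1: conservation.} Summing the four equations gives $\frac{d}{dt}\mathbf 1\trans x=0$, so $S+E+I+R\equiv N$.

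\textbf{Step 2: nonnegativity of each coordinate.}
\begin{itemize}
\item For $S$: the equation is linear in $S$, with $\dot S=-\kappa(t)S$ and $\kappa(t)=\beta(t;\theta)I(t)/N$ continuous along the solution. Hence $S(t)=S(0)\exp\{-\int_0^t\kappa\}\ge0$.
\item For $I$: similarly, $I(t)=e^{-\gamma t}I(0)+\int_0^t e^{-\gamma(t-u)}\sigma E(u)\,du$.
\item For $E$: $E(t)=e^{-\sigma t}E(0)+\int_0^t e^{-\sigma(t-u)}\beta(u;\theta) S(u)I(u)/N\,du$.
\end{itemize}
These two variation-of-constants formulas are coupled, and this coupling is the main obstacle: to conclude $E,I\ge0$ we must avoid assuming $S I\ge0$ in advance. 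I would handle it with a first-exit-time argument. First, $S\ge0$ holds unconditionally by the exponential formula above, for whatever sign $I$ has. Then let $t_1=\inf\{t:\min(E(t),I(t))<0\}$ and suppose $t_1<t^*$. On $[0,t_1]$ we have $S,E,I\ge0$. Both Duhamel formulas remain valid on a short interval beyond $t_1$, and their integrands are nonnegative up to $t_1$. Via a Grönwall estimate, this gives $E,I\ge -c\,\epsilon\sup_{[t_1,t_1+\epsilon]}(|E|+|I|)$ for small $\epsilon$, which forces $E,I\ge0$ on $[t_1,t_1+\epsilon]$ and contradicts the definition of $t_1$.

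An alternative that avoids this argument is to invoke the standard quasi-positivity criterion: $F_i(t,x)\ge0$ whenever $x\in\R_+^4$ and $x_i=0$. One checks this directly: $F_1=0$ when $S=0$; $F_2=\beta SI/N\ge0$ when $E=0$; $F_3=\sigma E\ge0$ when $I=0$; and $F_4=\gamma I\ge0$. Finally, $R(t)=R(0)+\gamma\int_0^t I\ge0$. Together with Step 1, these give $x(t)\in\Delta_N$ on $[0,t^*)$.

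\textbf{Step 3: global existence and the behaviour of $C$.} Since $\Delta_N$ is bounded, no blow-up can occur, so $t^*>\tau_T+h_{\max}$ and the unique solution exists on the entire interval. For $C$, integrate \eqref{eq:C}: $C(t)=\int_0^t\sigma E(u)\,du$. This is nondecreasing because $E\ge0$, and it is finite with $C(t)\le\sigma N t$ because $E\le N$.
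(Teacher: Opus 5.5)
Your proposal is correct and follows essentially the same route as the paper's proof: Picard--Lindel\"of for a unique maximal solution, conservation of $S+E+I+R$, quasi-positivity of the vector field on the boundary of $\R_+^4$ to keep the trajectory in $\Delta_N$, the continuation theorem to reach the full horizon, and $0\le C(t)\le\sigma N t$. The one slip is in your first-exit variant, and it is easily repaired: a bound of the form $E,I\ge -c\epsilon\sup(\abs{E}+\abs{I})$ does not by itself force nonnegativity (for example when $E(t_1)=0<I(t_1)$), so state it for the negative parts $E^-=\max(-E,0)$ and $I^-=\max(-I,0)$. The two Duhamel formulas then give $\sup E^-\le c\epsilon\sup I^-$ and $\sup I^-\le c\epsilon\sup E^-$ on $[t_1,t_1+\epsilon]$, hence $E^-=I^-=0$ once $c\epsilon<1$.
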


\begin{theorem}[Parameter differentiability]
\label{thm:smooth}
Assume that the vector field and initial-condition map are continuously differentiable in $\theta$ on a neighborhood of the compact feasible set. Then $x(t;\theta)$, $C(t;\theta)$, and every interval output $\mu_j(\theta)$ are continuously differentiable in $\theta$, uniformly on the finite analysis horizon. Their derivatives satisfy the standard sensitivity equations.
\end{theorem}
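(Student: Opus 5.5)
The plan is to obtain \Cref{thm:smooth} from the classical theorem on differentiable dependence of ODE solutions on parameters and initial data. The key point is that \Cref{thm:wellposed} already provides existence of solutions on the whole finite horizon together with an a priori compact invariant region. First write the augmented system \eqref{eq:S}--\eqref{eq:C} as $\dot z=F(t,z;\theta)$, $z(0)=z_0(\theta)$, with $z=(S,E,I,R,C)\trans$. Then fix an open neighborhood $\mathcal U$ of the compact feasible set $\mathcal H$ on which $F$ and $z_0$ are $C^1$ in $(z,\theta)$ and continuous in $t$. This is immediate from the hypotheses: $F$ is polynomial in $z$ and $\theta$ apart from the factor $\beta(t;\theta)$, which is jointly continuous and $C^1$ in $\theta$ for the cosine, exponential and logistic families. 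By \Cref{thm:wellposed}, for every $\theta\in\mathcal H$ the solution exists on $[0,\tau_T+h_{\max}]$, stays in $\Delta_N$, and has $0\le C(t)\le C(\tau_T+h_{\max})<\infty$.

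Second, extend existence to a neighborhood of $\mathcal H$. For $\theta\in\mathcal U$ the bound \eqref{eq:beta-positive} may fail slightly. Nonnegativity may also fail slightly, so I would not reuse the invariance argument there. Instead I would argue by continuous dependence. The trajectories for $\theta\in\mathcal H$ lie in the compact set $K=\Delta_N\times[0,\overline C]$. Choose a compact neighborhood $K'\supset K$ on which $F$ is uniformly Lipschitz in $z$, uniformly in $t$ and in $\theta$ from a compact neighborhood $\mathcal H'\subset\mathcal U$. A Gronwall estimate then shows that, after shrinking $\mathcal H'$, solutions with $\theta\in\mathcal H'$ cannot leave $K'$ before time $\tau_T+h_{\max}$, so they exist on the whole horizon. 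This a priori confinement on a neighborhood of $\mathcal H$ is the step I expect to be the main obstacle. The rest is standard, but differentiability at boundary points of $\mathcal H$ needs exactly such an open neighborhood.

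Third, apply the differentiability theorem (e.g.\ Hartman, Ch.~V, or Hairer--N{\o}rsett--Wanner). It gives that $\theta\mapsto z(t;\theta)$ is $C^1$ on the interior of $\mathcal H'$, jointly continuous in $(t,\theta)$ together with $\partial_\theta z$. Moreover $\Phi(t)=\partial z/\partial\theta\trans$ solves the sensitivity equation
\[
\dot\Phi=\partial_zF(t,z;\theta)\,\Phi+\partial_\theta F(t,z;\theta),\qquad \Phi(0)=\partial z_0/\partial\theta\trans .
\]
Since $[0,\tau_T+h_{\max}]\times\mathcal H$ is compact and $\Phi$ is jointly continuous, $\Phi$ is uniformly continuous there. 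This gives the uniformity over the horizon; Gronwall applied to the linear sensitivity equation also yields a uniform bound on $\Phi$.

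Finally, the components $x(t;\theta)$ and $C(t;\theta)$ are coordinates of $z$. By \eqref{eq:interval-incidence}, $\mu_j(\theta)=C(\tau_j;\theta)-C(\tau_{j-1};\theta)$ is a difference of two $C^1$ functions, so it is $C^1$ with $\partial_\theta\mu_j=\partial_\theta C(\tau_j)-\partial_\theta C(\tau_{j-1})$. Equivalently, $\partial_\theta\mu_j=\int_{\tau_{j-1}}^{\tau_j}\partial_\theta\{\sigma E(u;\theta)\}\,du$. Either form follows from the sensitivity equation for $C$, and together they give the rows of $J_T(\theta)$.
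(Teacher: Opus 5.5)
Your proposal is correct and follows essentially the same route as the paper's proof: it uses confinement from the well-posedness theorem, differentiates the integral form $z(t;\theta)=z_0(\theta)+\int_0^t G\{u,z(u;\theta),\theta\}\,du$ (classical differentiable dependence) to get the sensitivity equation, uses compactness of $[0,\tau_T+h_{\max}]\times\mathcal H$ for uniformity, and differences the $C$-row to obtain $\partial_\theta\mu_j$. Your additional Gronwall step, which extends existence to an open neighborhood of $\mathcal H$, is a sound refinement that the paper skips when it invokes confinement only on the feasible set, and it is what makes differentiability at boundary points of $\mathcal H$ rigorous.
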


Define
\[
\bar L_T(\theta)=T^{-1}\sum_{j=1}^T\abs{Y_j-\mu_j(\theta)},
\qquad Q_T(\theta)=\E\{\bar L_T(\theta)\}.
\]

\begin{assumption}[Uniform regularity]
\label{ass:uniform}
The set $\mathcal H$ is compact; $\sup_{T,j,\theta\in\mathcal H}\abs{\mu_j(\theta)}<\infty$; and there exists $K<\infty$ such that
\[
\abs{\mu_j(\theta)-\mu_j(\vartheta)}\le K\norm{\theta-\vartheta}
\]
for every $T,j$ and $\theta,\vartheta\in\mathcal H$. The errors are independent, have median zero, and satisfy $\sup_{T,j}\E\varepsilon_{Tj}^2<\infty$.
\end{assumption}

\begin{assumption}[Uniform separation]
\label{ass:separation}
For every $\epsilon>0$,
\[
\liminf_{T\to\infty}\inf_{\theta\in\mathcal H:\norm{\theta-\theta_0}\ge\epsilon}
\{Q_T(\theta)-Q_T(\theta_0)\}>0.
\]
\end{assumption}

\begin{theorem}[Existence, uniform convergence, and consistency]
\label{thm:consistency}
If $\mu_j(\theta)$ is continuous on compact $\mathcal H$, the LAD and LSQ objectives attain their minima. Under \cref{ass:uniform},
\[
\sup_{\theta\in\mathcal H}\abs{\bar L_T(\theta)-Q_T(\theta)}\xrightarrow{p}0.
\]
Under \cref{ass:uniform,ass:separation}, every measurable LAD minimizer $\widehat\theta_T$ satisfies $\widehat\theta_T\xrightarrow{p}\theta_0$.
\end{theorem}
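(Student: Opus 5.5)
The proof has three parts, matching the three claims of \cref{thm:consistency}. First comes existence of minimizers. For fixed data, each map $\theta\mapsto\abs{Y_j-\mu_j(\theta)}$ and $\theta\mapsto (Y_j-\mu_j(\theta))^2$ is continuous on $\mathcal H$, because $\mu_j$ is continuous and the maps $u\mapsto\abs{u}$ and $u\mapsto u^2$ are continuous. Finite sums of continuous functions are continuous, so $L_T^{\mathrm{LAD}}$ and $L_T^{\mathrm{LSQ}}$ are continuous on the compact set $\mathcal H$, and Weierstrass gives attained minima. Continuity of $\mu_j$ on $\mathcal H$ is supplied in our setting by \cref{thm:wellposed,thm:smooth}. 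A measurable selection of the argmin exists by standard results (e.g.\ Jennrich's lemma), but the theorem only asserts the conclusion for measurable minimizers, so I will simply take one as given.

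Next is uniform convergence, which I will prove by a pointwise LLN combined with stochastic equicontinuity. Define $f_j(\theta)=\abs{\varepsilon_j+\mu_j(\theta_0)-\mu_j(\theta)}$. By \cref{ass:uniform}, $\mu_j$ is bounded by some $B$, so
\[
\operatorname{Var} f_j(\theta)\le \E f_j(\theta)^2\le 2\E\varepsilon_j^2+8B^2,
\]
which is bounded uniformly in $T,j,\theta$. The errors are independent, so Chebyshev's inequality gives $\bar L_T(\theta)-Q_T(\theta)=O_p(T^{-1/2})$ for each fixed $\theta$. For the equicontinuity step, the reverse triangle inequality yields
\[
\abs{f_j(\theta)-f_j(\vartheta)}\le\abs{\mu_j(\theta)-\mu_j(\vartheta)}\le K\norm{\theta-\vartheta}.
\]
Hence both $\bar L_T$ and $Q_T$ are $K$-Lipschitz deterministically, with no random Lipschitz constant to control. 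Given $\eta>0$, I will cover $\mathcal H$ by finitely many balls of radius $\eta/(3K)$, which compactness allows, with centers $\theta_1,\dots,\theta_m$ independent of $T$. Then
\[
\sup_{\theta\in\mathcal H}\abs{\bar L_T(\theta)-Q_T(\theta)}\le \max_{i\le m}\abs{\bar L_T(\theta_i)-Q_T(\theta_i)}+\tfrac{2\eta}{3},
\]
and the maximum over finitely many points tends to zero in probability by the pointwise step and a union bound. This settles the uniform convergence. The median-zero condition is not needed here; it matters only for identification.

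Finally, consistency follows by the standard argmin argument. Fix $\epsilon>0$ and let $\delta>0$ with $\inf_{\norm{\theta-\theta_0}\ge\epsilon}\{Q_T(\theta)-Q_T(\theta_0)\}\ge\delta$ for all large $T$, which \cref{ass:separation} provides. On the event $\norm{\widehat\theta_T-\theta_0}\ge\epsilon$, write $s_T=\sup_{\mathcal H}\abs{\bar L_T-Q_T}$. Then
\[
\delta\le Q_T(\widehat\theta_T)-Q_T(\theta_0)\le \bar L_T(\widehat\theta_T)-\bar L_T(\theta_0)+2s_T\le 2s_T,
\]
using $\bar L_T(\widehat\theta_T)\le\bar L_T(\theta_0)$. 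Therefore
\[
P(\norm{\widehat\theta_T-\theta_0}\ge\epsilon)\le P(s_T\ge\delta/2)\to0.
\]
The main subtlety is not technical difficulty but bookkeeping. Because the data form a triangular array $\varepsilon_{Tj}$, the number of balls in the cover must not depend on $T$; uniform Lipschitzness in \cref{ass:uniform} guarantees this. The separation assumption must also hold eventually rather than only in the limit, and the $\liminf$ formulation handles this. I would also remark, though it is not needed for the proof, that the median-zero condition makes $\theta_0$ a minimizer of each $\E f_j$, which is what makes \cref{ass:separation} plausible.
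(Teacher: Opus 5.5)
Your proposal is correct and follows essentially the same route as the paper's proof: Weierstrass for existence, then the deterministic $K$-Lipschitz bound from the reverse triangle inequality with a $T$-independent finite net, Chebyshev and a union bound for uniform convergence, and finally the standard argmin argument using eventual separation for consistency. The only differences are cosmetic: you use an $\eta/(3K)$ cover and a $\delta/2$ threshold, where the paper uses a $\delta$-net with $2K\delta$ slack and the event $\sup_{\theta\in\mathcal H}\abs{\bar L_T(\theta)-Q_T(\theta)}<\Delta_\epsilon/3$.
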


Let $J_{Tj}=\partial\mu_j(\theta_0)/\partial\theta$ and $A_T=T^{-1}\sum_{j=1}^T J_{Tj}J_{Tj}\trans$.

\begin{assumption}[Local asymptotic conditions]
\label{ass:an}
The true parameter is an interior point of $\mathcal H$. The errors are identically distributed with median zero and density $f$ continuous and positive at zero. Further, $A_T\to A$ with $A$ positive definite, $\max_j\norm{J_{Tj}}/\sqrt T\to0$, the interval-output map has a uniform local quadratic remainder, and a Lindeberg condition holds for the deterministic Jacobian array.
\end{assumption}

\begin{theorem}[Asymptotic normality of LAD]
\label{thm:an}
Under \cref{ass:uniform,ass:separation,ass:an},
\[
\sqrt T(\widehat\theta_T-\theta_0)\xrightarrow{d}
N\left(0,\frac{1}{4f(0)^2}A^{-1}\right).
\]
\end{theorem}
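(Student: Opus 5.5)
The argument is the standard one for nonlinear LAD regression: a convexity-free local quadratic expansion of the objective in the local parameter $u=\sqrt T(\theta-\theta_0)$, followed by an argmin theorem. Consistency from \cref{thm:consistency} reduces the problem to an $O(T^{-1/2})$ neighbourhood of the interior point $\theta_0$. Define
\[
Z_T(u)=\sum_{j=1}^T\Bigl\{\abs{\varepsilon_j-\{\mu_j(\theta_0+u/\sqrt T)-\mu_j(\theta_0)\}}-\abs{\varepsilon_j}\Bigr\}.
\]
Then $\sqrt T(\widehat\theta_T-\theta_0)$ minimizes $Z_T$ over $u\in\sqrt T(\mathcal H-\theta_0)$. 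Write $d_{Tj}(u)=\mu_j(\theta_0+u/\sqrt T)-\mu_j(\theta_0)$. By the uniform local quadratic remainder in \cref{ass:an},
\[
d_{Tj}(u)=J_{Tj}\trans u/\sqrt T+O(\norm u^2/T)
\]
uniformly in $j$ and in $u$ on compacts.

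I would apply Knight's identity,
\[
\abs{\varepsilon-d}-\abs{\varepsilon}=-d\{\ind(\varepsilon>0)-\ind(\varepsilon<0)\}+2\int_0^d\{\ind(\varepsilon\le s)-\ind(\varepsilon\le 0)\}\,ds,
\]
which splits $Z_T(u)=Z_{T1}(u)+Z_{T2}(u)$ into a linear and a quadratic part.

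\textbf{Linear part.} Take
\[
Z_{T1}(u)=-u\trans W_T+o_p(1),\qquad W_T=T^{-1/2}\sum_j J_{Tj}\,\mathrm{sgn}(\varepsilon_j).
\]
The signs are i.i.d.\ with mean zero (median zero, continuous $f$) and variance one. The Lindeberg condition on the deterministic Jacobian array, together with $A_T\to A$, gives $W_T\xrightarrow{d}N(0,A)$ by Lindeberg--Feller. The quadratic remainder in $d_{Tj}$ contributes a term of the form $\sum_j O(T^{-1})\,\mathrm{sgn}(\varepsilon_j)=O_p(T^{-1/2})$.

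\textbf{Quadratic part.} For $Z_{T2}$, compute the mean as
\[
\sum_j 2\int_0^{d_{Tj}}\{F(s)-F(0)\}\,ds=\sum_j f(0)\,d_{Tj}^2\{1+o(1)\}\to f(0)\,u\trans A u.
\]
This uses continuity of $f$ at zero and $\max_j\abs{d_{Tj}}\to0$, which follows from $\max_j\norm{J_{Tj}}/\sqrt T\to 0$. The variance is bounded by
\[
\sum_j 4\abs{d_{Tj}}\int_0^{\abs{d_{Tj}}}\abs{F(s)-F(0)}\,ds\le 4\max_j\abs{d_{Tj}}\sum_j d_{Tj}^2\,O(1)\to0,
\]
so $Z_{T2}(u)\to f(0)u\trans Au$ in probability for each fixed $u$.

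Hence $Z_T(u)=-u\trans W_T+f(0)u\trans Au+o_p(1)$ pointwise. The limit is minimized at $u^*=(2f(0))^{-1}A^{-1}W$, whose distribution is $N(0,\{4f(0)^2\}^{-1}A^{-1})$.

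\textbf{Main obstacle: passing to the argmin.} Because $\mu_j$ is nonlinear, $Z_T$ is not convex in $u$, so the convexity lemma of Pollard / Hjort--Pollard cannot be used directly. I would handle this in two steps.

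\emph{Step 1: $\sqrt T$-tightness.} For large $M$ and small $\delta$, I would show
\[
\inf_{M\le\norm u\le\delta\sqrt T}Z_T(u)>Z_T(u^*)
\]
with high probability. For this I would strengthen the pointwise expansion to one uniform over $\norm u\le\delta\sqrt T$ with error $o_p(1+\norm u^2)$. The tool is a stochastic equicontinuity bound for the centered process $Z_{T2}-\E Z_{T2}$ over the class indexed by $u$, obtained from a bracketing argument using the Lipschitz bound in \cref{ass:uniform}. The quadratic drift $f(0)u\trans Au$ then dominates, since $A$ is positive definite, and consistency rules out $\norm u>\delta\sqrt T$.

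\emph{Step 2: conclusion.} Uniform convergence on compacts, together with the unique minimizer of the limit, gives the result via the argmax continuous mapping theorem (van der Vaart--Wellner, Thm 3.2.2). Equivalently,
\[
\sqrt T(\widehat\theta_T-\theta_0)=(2f(0))^{-1}A^{-1}W_T+o_p(1).
\]
I expect the uniform, non-convex bracketing bound in Step 1 to be the technically delicate part.
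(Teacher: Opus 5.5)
\textbf{Comparison with the paper's proof.} Your proof is correct in outline. Up to the pointwise limit $-u\trans W_T+f(0)\,u\trans Au$ it follows the paper: Knight's identity, Lindeberg--Feller for the sign score, and a quadratic term whose variance vanishes. The routes part at the argmin step.

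The paper applies Knight's identity to the \emph{linearized} increment $J_{Tj}\trans h/\sqrt T$, so its leading local process is convex in $h$. It treats the nonlinear part as a remainder that is $o_p(1)$ uniformly on compacts, and then concludes by ``the convex argmin theorem and consistency.'' You apply the identity to the exact increment $d_{Tj}(u)$, note that the true local objective is not convex, and split the conclusion into a $\sqrt T$-tightness step followed by the argmax continuous mapping theorem.

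The paper's route is shorter, but convexity belongs only to the surrogate. Uniform closeness to it on compacts, together with $\widehat\theta_T-\theta_0=o_p(1)$, does not rule out a minimizer with $1\ll\norm{\sqrt T(\widehat\theta_T-\theta_0)}\ll\sqrt T$. The $O_p(T^{-1/2})$ tightness your Step~1 establishes is therefore exactly what that appeal needs but does not supply. For the same reason, your opening sentence overstates what consistency gives: it localizes only to $o_p(1)$, and the $T^{-1/2}$ neighbourhood is the content of Step~1.

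\textbf{Two adjustments for Step~1.} First, for fixed $\delta$ and $\norm u$ of order $\delta\sqrt T$ the increments $d_{Tj}$ do not vanish, so $\E Z_T(u)=f(0)u\trans Au+o(\norm u^2)$ is not available there. Use instead the lower bound $\E Z_T(u)\ge c\norm u^2$. This follows from $f\ge f(0)/2$ near zero together with $\sum_j d_{Tj}(u)^2\ge\tfrac12\lambda_{\min}(A)\norm u^2$ for small $\delta$, which comes from the quadratic remainder. Alternatively, let $\delta=\delta_T\downarrow0$, which consistency permits. Also compare with $Z_T(0)=0$ rather than with $Z_T(u^*)$.

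Second, controlling $Z_{T2}-\E Z_{T2}$ alone is not enough. The sign-weighted remainder $\sum_j\{d_{Tj}(u)-J_{Tj}\trans u/\sqrt T\}\mathrm{sgn}(\varepsilon_j)$ is only crudely $O(\norm u^2)$, the same order as the drift. It is simplest to bound the full centered process instead:
\begin{itemize}
\item Each summand $\abs{\varepsilon_j-d_{Tj}(u)}-\abs{\varepsilon_j}$ is Lipschitz in $u$ with deterministic constant $K/\sqrt T$.
\item Hoeffding increments and Dudley chaining then give $\E\sup_{\norm u\le R}\abs{Z_T(u)-\E Z_T(u)}\lesssim R$.
\item Peeling over shells $2^kM\le\norm u<2^{k+1}M$ against the $c\norm u^2$ drift bounds $P\{\inf_{M\le\norm u\le\delta\sqrt T}Z_T(u)\le0\}$ by a multiple of $\sum_{k\ge0}2^kM/(4^kM^2)=O(M^{-1})$.
\item The same modulus gives the uniform convergence on compacts needed in Step~2.
\end{itemize}
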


These results provide a theoretical foundation for LAD calibration under increasing-reporting designs with interior parameters and stable local identification. The RAPIDD application complements the asymptotic analysis with finite-sample rolling-origin validation over 15--35 calibration observations and with explicit sensitivity diagnostics, so forecast performance and parameter identifiability are assessed on their appropriate scales.

\section{Phase-aware Monte Carlo study}
\label{sec:simulation}

\subsection{Design}
The simulation follows the ADEMP structure for evaluating statistical methods \citep{morris2019simulation}. Daily interval incidence is generated from the three transmission families in \cref{eq:beta-cos,eq:beta-exp,eq:beta-log} with $N=1{,}000{,}000$, $(E_0,I_0,R_0)=(40,20,0)$, $\beta_0=0.36$ day$^{-1}$, $\sigma=1/7$ day$^{-1}$, and $\gamma=1/6.5$ day$^{-1}$. The driver-specific values are $a=0.25$ and $\omega=2\pi/120$ for the cosine curve, $a=0.60$ and $b=-0.015$ for the exponential curve, and $(q,k,\tau)=(0.25,0.08,70)$ for logistic decline. Each trajectory contains 240 daily reporting intervals. Baseline observations are negative-binomial with mean equal to model incidence and size parameter 30.

Forecast origins are selected from the deterministic mean curve. Early growth is the first pre-peak time at which incidence reaches 20\% of the peak; near peak is two intervals before the peak; early decline is the first post-peak time below 70\% of the peak; and late decline is the first post-peak time below 20\%. The resulting origins are driver specific. Forecast horizons are 1, 3, 5, and 10 intervals.

Five reporting conditions are examined in the 28 intervals immediately preceding each origin. Clean data receive no additional perturbation. Isolated spikes add positive increments at 10\% of eligible reporting times. A backlog mechanism retains 25\% of the counts for four days and releases the accumulated balance at the origin. The temporary underreporting with delayed release mechanism retains 50\% of counts for ten days and releases 90\% of the missing total at the origin. A serial mechanism adds AR(1) reporting error with $\phi=0.75$ and innovation scale tied to 20\% of local median incidence. The combination of three drivers, four phases, and five reporting mechanisms yields 60 conditions. Each condition uses 1,000 paired replications, the same observations and starts for LAD and LSQ, and 500 predictive draws per fitted model. The core reproducibility settings for these simulations and the RAPIDD analyses are summarized in \Cref{tab:repro-main}, with complete bounds, starts, seeds, and software details reported in the supplement.

\begin{table}[t]
\centering
\caption{Core reproducibility settings. Complete parameter bounds, starting-value rules, seed definitions, and software versions are reported in the supplement.}
\label{tab:repro-main}
\small
\begin{tabularx}{\textwidth}{@{}lX@{}}
\toprule
Component & Setting\\
\midrule
Monte Carlo design & 60 driver--phase--reporting conditions; 1,000 paired replications per condition; 500 predictive draws.\\
Simulation horizons & 1, 3, 5, and 10 daily intervals from early growth, near peak, early decline, and late decline.\\
RAPIDD validation & Rolling origins 15--35 and horizons 1--5 weeks; fixed-origin example uses weeks 1--30 for calibration and 31--35 for scoring.\\
Optimization & 12 shared feasible starts for paired LAD/LSQ fits; L-BFGS-B, maximum 3,000 iterations; LSODA tolerances $10^{-9}$.\\
Uncertainty & Negative-binomial predictive distributions for rolling validation; four refitted-bootstrap procedures in the fixed-origin sensitivity analysis.\\
Software & R 4.5.0, \texttt{deSolve} 1.42, Linux x86-64; base seed 20260627.\\
\bottomrule
\end{tabularx}
\end{table}

\subsection{Phase-specific forecast comparison}
All 120,000 primary simulation fits (60 conditions $\times$ 1,000 replications $\times$ 2 losses) returned nominal optimizer convergence. The paired difference is defined as LAD minus LSQ; negative values favor LAD for error scores. A winner is assigned only when the paired 95\% Monte Carlo confidence interval excludes zero. The resulting phase-specific counts of LAD-favored, LSQ-favored, and inconclusive comparisons are reported in \Cref{tab:simulation-phase-counts}.

\begin{table}[t]
\centering
\caption{Numbers of phase-specific forecast comparisons in which the 95\% paired Monte Carlo confidence interval favored LAD, favored LSQ, or included zero. Each phase contains 60 comparisons per metric (three transmission drivers, five reporting mechanisms, and four horizons).}
\label{tab:simulation-phase-counts}
\small
\begin{tabular}{@{}lccc ccc@{}}
\toprule
&\multicolumn{3}{c}{MAE}&\multicolumn{3}{c}{WIS}\\
\cmidrule(lr){2-4}\cmidrule(lr){5-7}
Phase & LAD & LSQ & No clear & LAD & LSQ & No clear\\
\midrule
Early growth & 34 & 22 & 4 & 36 & 22 & 2\\
Near peak & 40 & 9 & 11 & 43 & 8 & 9\\
Early decline & 34 & 7 & 19 & 32 & 9 & 19\\
Late decline & 28 & 8 & 24 & 28 & 8 & 24\\
\midrule
Total & 136 & 46 & 58 & 139 & 47 & 54\\
\bottomrule
\end{tabular}
\end{table}

Across the full 240-comparison design, LAD was favored in 136 MAE comparisons (56.7\%) and 139 WIS comparisons (57.9\%). The robustness advantage was concentrated where short-lived reporting anomalies create high-leverage residuals: across isolated spikes, backlog release, and temporary underreporting with delayed release, LAD was favored in 106 of 144 MAE comparisons (73.6\%) and 104 of 144 WIS comparisons (72.2\%). Clean and serially correlated settings provide complementary reference regimes for understanding how the advantage changes when the anomaly structure is absent or temporally persistent.

Phase also mattered. LAD was most frequently favored near the peak, where its paired intervals favored it in 40 of 60 MAE comparisons (66.7\%) and 43 of 60 WIS comparisons (71.7\%). During early growth, LAD retained a clear advantage for backlog release and temporary underreporting with delayed release. To quantify the magnitude of the gains beyond statistical separation, we calculated the paired relative change $100(\mathrm{MAE}_{\mathrm{LAD}}-\mathrm{MAE}_{\mathrm{LSQ}})/\mathrm{MAE}_{\mathrm{LSQ}}$ for each matched replication. The median and interquartile range by reporting mechanism and phase are shown in \Cref{tab:relative-mae}. Relative differences are undefined when the matched LSQ MAE is zero; 416 of 240,000 matched forecast comparisons (0.17\%) were therefore omitted only from this normalized summary. The largest median LAD reductions occur for backlog release and temporary underreporting with delayed release during early growth and near the peak, identifying the regimes in which robust residual weighting has the greatest practical effect.

\begin{table}[p]
\centering
\small
\caption{Magnitude of paired LAD-versus-LSQ forecast differences. Entries are the median percentage change in MAE, $100(\mathrm{MAE}_{\mathrm{LAD}}-\mathrm{MAE}_{\mathrm{LSQ}})/\mathrm{MAE}_{\mathrm{LSQ}}$, with the interquartile range (IQR) in parentheses, pooled over drivers, horizons, and replications. Negative values favor LAD. Ratios are omitted when the matched LSQ MAE is zero (416 of 240,000 comparisons).}
\label{tab:relative-mae}
\begin{tabular}{llr}
\toprule
Reporting mechanism & Epidemic phase & Median \% change (IQR) \\
\midrule
Clean & Early growth & 6.1 (-40.5, 106.2) \\
 & Near peak & 4.2 (-42.2, 91.4) \\
 & Early decline & 0.0 (-42.7, 78.0) \\
 & Late decline & 0.0 (-53.2, 114.2) \\
\addlinespace
Isolated spikes & Early growth & 7.8 (-44.0, 126.0) \\
 & Near peak & -7.1 (-55.5, 79.0) \\
 & Early decline & -13.8 (-59.1, 70.5) \\
 & Late decline & -11.0 (-62.7, 71.4) \\
\addlinespace
Backlog release & Early growth & -72.9 (-89.1, -21.6) \\
 & Near peak & -45.0 (-79.0, 41.8) \\
 & Early decline & -15.4 (-60.6, 72.0) \\
 & Late decline & 9.2 (-50.0, 146.7) \\
\addlinespace
Temporary underreporting with delayed release & Early growth & -84.2 (-92.9, -65.4) \\
 & Near peak & -56.2 (-78.5, -10.2) \\
 & Early decline & -27.5 (-74.8, 67.1) \\
 & Late decline & 14.5 (-48.9, 160.7) \\
\addlinespace
Serially correlated reporting error & Early growth & 10.5 (-39.1, 114.1) \\
 & Near peak & 0.0 (-48.7, 79.0) \\
 & Early decline & -6.8 (-50.8, 62.4) \\
 & Late decline & 0.0 (-52.9, 107.7) \\
\addlinespace
\bottomrule
\end{tabular}
\end{table}

The full phase-aware winner pattern is displayed for MAE in \Cref{fig:sim-mae-winner} and for WIS in \Cref{fig:sim-wis-winner}.

\begin{figure}[p]
\centering
\includegraphics[width=\textwidth]{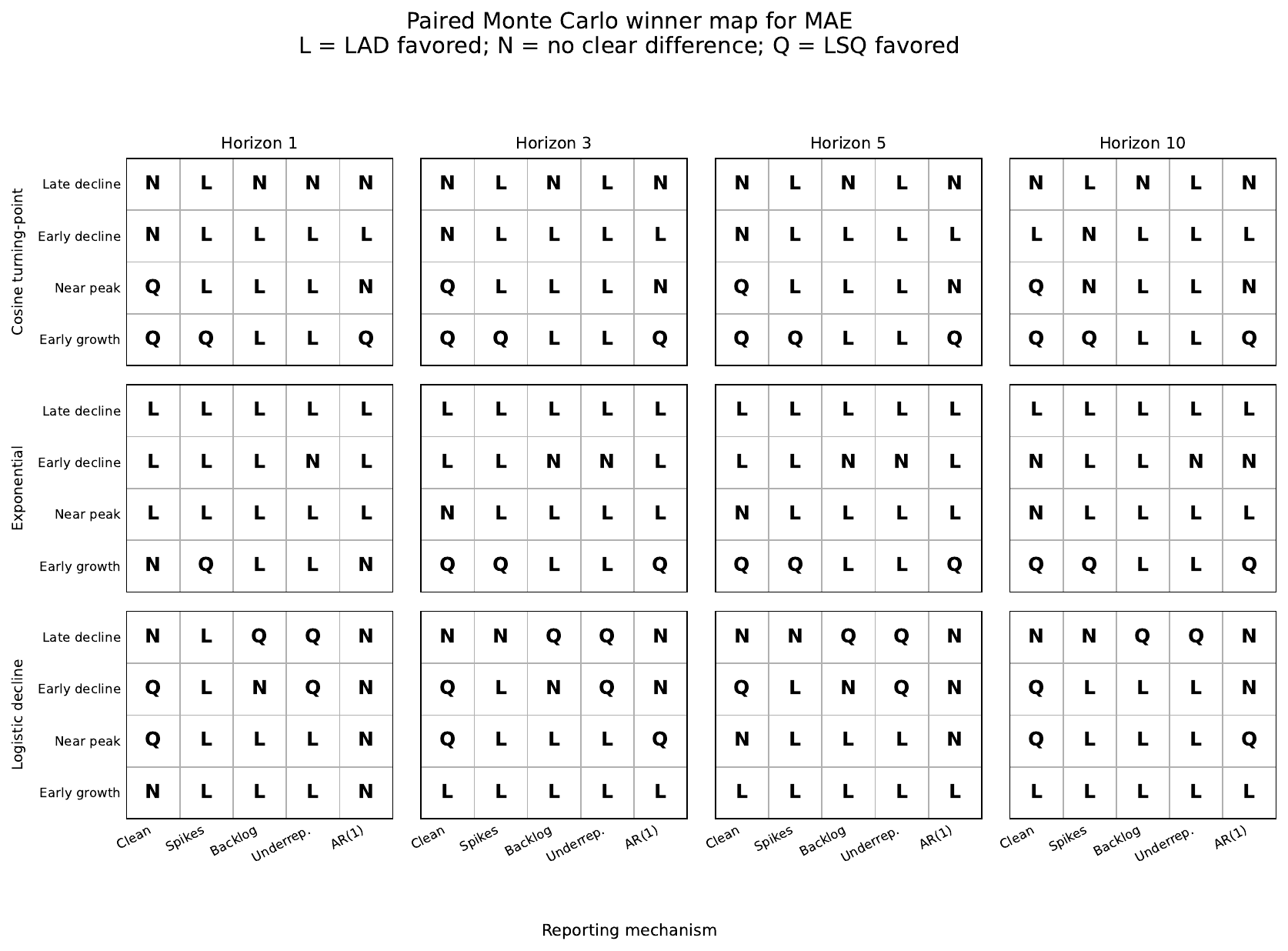}
\caption{Phase-aware paired Monte Carlo winner map for forecast mean absolute error (MAE). Cell symbols are L (LAD favored), N (no clear difference), and Q (LSQ favored) according to the 95\% paired Monte Carlo confidence interval. AR(1) denotes first-order autoregressive reporting error; Underrep. denotes temporary underreporting with delayed release.}
\label{fig:sim-mae-winner}
\end{figure}

\begin{figure}[p]
\centering
\includegraphics[width=\textwidth]{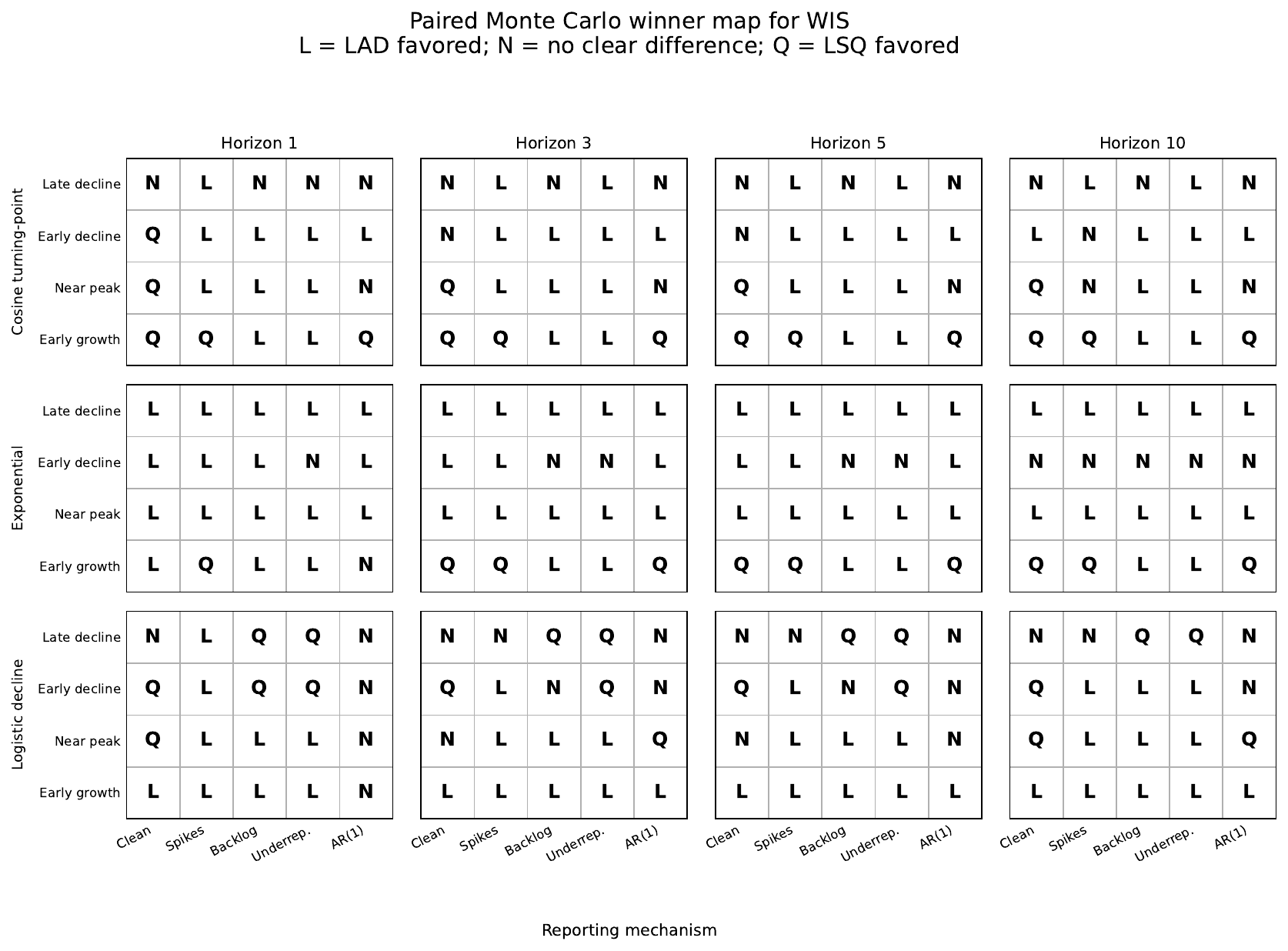}
\caption{Phase-aware paired Monte Carlo winner map for weighted interval score (WIS). Cell symbols are L (LAD favored), N (no clear difference), and Q (LSQ favored). AR(1) denotes first-order autoregressive reporting error; Underrep. denotes temporary underreporting with delayed release.}
\label{fig:sim-wis-winner}
\end{figure}

Parameter recovery provides additional evidence of robust-estimation gains. LAD had lower RMSE for $E_0$ in 56 of 60 conditions, for $\sigma$ in 48 of 60, for the cosine frequency $\omega$ in all 20 relevant conditions, and for logistic parameters $q$ and $k$ in 18 of 20 and 15 of 20 conditions, respectively. These results show that robust trajectory weighting can improve recovery of several important initial-state, progression, and transmission-shape components in addition to improving forecast scores.

\section{Synthetic Ebola forecasting application}
\label{sec:application}

\subsection{RAPIDD scenarios and released incidence series}
The application uses the four synthetic outbreaks released for the RAPIDD Ebola Forecasting Challenge \citep{ajelli2018rapiddmodel,viboud2018rapidd}. The challenge generator represented Ebola transmission across the 15 counties of Liberia and incorporated treatment-unit capacity, contact tracing, safe burial, case isolation, behavioral change, mobility, underreporting, delayed reporting, and incomplete case information. The scenarios are distinct stochastic realizations with different intervention trajectories and information environments; they are not four fixed intervention percentages and they should not be interpreted as four observed national outbreaks.

Each released series contains 46 weekly incidence observations from January 13 through November 24, 2014. The values are used directly, without interpolation, smoothing, differencing, or imputation. Scenario 1 is a comparatively data-rich outbreak controlled after progressive strengthening of intervention. Scenario 2 has lower information availability, is ultimately controlled, and peaks later. Scenario 3 combines a distinctive early stochastic path with an abrupt intervention change before decline. Scenario 4 contains prolonged growth followed by an abrupt drop in the released series. The largest Scenario 4 decrease occurs at week 40 (October 13, 2014), when the released count falls by 494 cases. This drop is outside the prespecified fixed-origin scoring window (weeks 31--35), but it is included in some later rolling-origin targets. We treat it as an abrupt structural reporting change in the released synthetic series rather than as evidence of instantaneous elimination of transmission. The four released scenario trajectories are shown in \Cref{fig:rapidd-overview}.

\begin{figure}[t]
\centering
\includegraphics[width=0.94\textwidth]{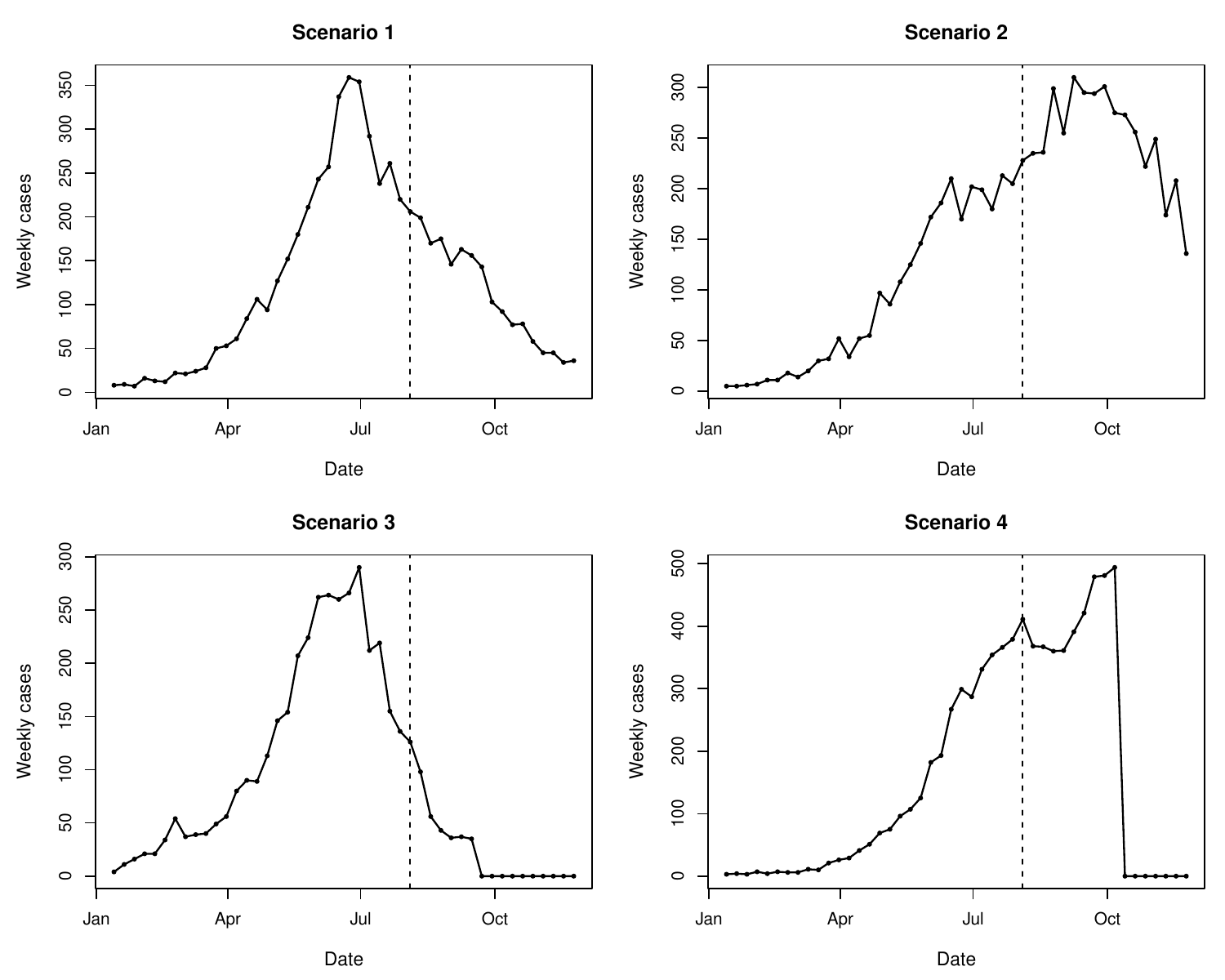}
\caption{Weekly released case counts in the four synthetic RAPIDD Ebola scenarios. The vertical scale is scenario-specific. The abrupt late drop in Scenario 4 occurs outside the prespecified weeks 31--35 fixed-origin scoring window, but inside some rolling-origin targets; it is treated as an abrupt structural reporting change in the released series.}
\label{fig:rapidd-overview}
\end{figure}

\subsection{Forecasting protocol}
ODE rates are expressed per day, with population fixed at $N=4{,}499{,}621$. The cosine turning-point, exponential, and logistic-decline transmission families are fitted under both LAD and LSQ. The primary validation is rolling origin: calibration endpoints are weeks 15 through 35 and horizons are 1 through 5 weeks. Every model is evaluated on the same available origin--horizon pairs. The two simple baselines described in \cref{sec:estimation} are included in the same evaluation.

The prespecified fixed-origin analysis is retained as a transparent example. Weeks 1--30 are used for calibration, weeks 31--35 are scored, and weeks 36--46 are not used in that fixed-origin score. They remain visible in the descriptive series and contribute to rolling-origin validation when they fall within a target horizon. Thus the late Scenario 4 drop is deliberately excluded from the fixed-origin score, not removed from the data set.

Rolling-origin predictive distributions use the negative-binomial construction for all eight candidate models: six SEIR combinations (three transmission drivers by two losses), the naive last-observation forecast, and the recent-window exponential forecast. The implementation does not hold one dispersion value fixed for an entire model--origin fit. Instead, within each predictive replication, residuals from that model's calibration window are sampled with replacement to the required forecast length. If $e_h^*$ denotes a sampled residual and $\widehat\mu_h$ the forecast mean, the pseudo-count is $y_h^*=\widehat\mu_h+e_h^*$ and the moment estimator is
\[
\widehat v^*=\frac{1}{H}\sum_{h=1}^{H}(y_h^*-\widetilde\mu_h)^2,\qquad
\overline\mu=\frac{1}{H}\sum_{h=1}^{H}\widetilde\mu_h,\qquad
\overline{\mu^2}=\frac{1}{H}\sum_{h=1}^{H}\widetilde\mu_h^2,
\]
where $\widetilde\mu_h=\max(\widehat\mu_h,10^{-6})$, followed by
\[
\widehat\kappa^*=\frac{\overline{\mu^2}}{\widehat v^*-\overline\mu}.
\]
Thus the size parameter is re-estimated within each predictive draw, using residuals from the current model--origin calibration fit. No degrees-of-freedom correction is applied. If $\widehat v^*-\overline\mu\le 10^{-8}$ or the estimate is nonfinite, the size is set to 30; otherwise it is truncated to $[0.1,10^6]$. Observed zero counts are retained; only model means are floored for numerical validity, and the negative-binomial generator uses a mean floor of $10^{-8}$. The same moment formula and constraints are used for SEIR and baseline forecasts, although each model supplies its own fitted values and residuals. Expanded notes in the supplement distinguish this rolling predictive construction from the fixed-origin negative-binomial bootstrap. The fixed-origin uncertainty sensitivity analysis compares IID residual, wild, moving-block, and negative-binomial parametric bootstraps. Scores include MAE, WIS, empirical coverage, and interval width at the 50\%, 80\%, 90\%, and 95\% levels. Randomized PIT values are also retained for diagnostic assessment.

\section{RAPIDD forecasting results}
\label{sec:rapidd-results}

\subsection{Rolling-origin validation}
The rolling-origin analysis contains 21 calibration endpoints per scenario, five horizons, and eight candidate models, subject only to the availability of future observations near the end of each series. The same origin--horizon pairs are used within every comparison. The resulting winners by scenario and forecast horizon are summarized in \Cref{tab:rapidd-winners}.

\begin{table}[t]
\centering
\caption{Rolling-origin forecast winners across calibration endpoints 15--35. Parentheses give the mean score over available origins. MAE denotes mean absolute error; WIS denotes weighted interval score; NL denotes naive last observation; RE denotes recent-window exponential; and LD-LSQ denotes logistic-decline transmission calibrated by least squares. ``Best SEIR'' is selected by WIS.}
\label{tab:rapidd-winners}
\scriptsize
\begin{tabular}{@{}llccc@{}}
\toprule
Scenario & Horizon & Lowest MAE & Lowest WIS & Best SEIR (WIS)\\
\midrule
Scenario 1 & 1 & NL (27.00) & NL (16.62) & LD-LSQ (24.44)\\
 & 2 & RE (44.24) & NL (28.66) & LD-LSQ (38.34)\\
 & 3 & NL (63.86) & NL (44.14) & LD-LSQ (52.96)\\
 & 4 & NL (81.24) & NL (58.56) & LD-LSQ (78.01)\\
 & 5 & NL (92.95) & NL (70.15) & LD-LSQ (102.3)\\
\addlinespace
Scenario 2 & 1 & LD-LSQ (18.79) & RE (12.51) & LD-LSQ (13.85)\\
 & 2 & LD-LSQ (25.05) & RE (16.57) & LD-LSQ (17.36)\\
 & 3 & NL (34.38) & NL (24.20) & LD-LSQ (25.65)\\
 & 4 & NL (46.10) & NL (32.36) & LD-LSQ (37.76)\\
 & 5 & NL (51.12) & NL (36.76) & LD-LSQ (53.82)\\
\addlinespace
Scenario 3 & 1 & NL (22.71) & NL (14.98) & LD-LSQ (18.66)\\
 & 2 & RE (40.74) & RE (25.91) & LD-LSQ (30.89)\\
 & 3 & NL (58.52) & NL (42.13) & LD-LSQ (48.04)\\
 & 4 & NL (76.12) & NL (57.05) & LD-LSQ (67.25)\\
 & 5 & NL (91.60) & NL (70.83) & LD-LSQ (87.01)\\
\addlinespace
Scenario 4 & 1 & NL (24.90) & NL (16.24) & LD-LSQ (18.29)\\
 & 2 & NL (44.95) & NL (29.30) & LD-LSQ (31.27)\\
 & 3 & NL (64.14) & NL (45.04) & LD-LSQ (49.81)\\
 & 4 & NL (82.81) & NL (61.70) & LD-LSQ (76.29)\\
 & 5 & NL (112.9) & NL (91.23) & LD-LSQ (113.9)\\
\bottomrule
\end{tabular}
\begin{minipage}{0.96\textwidth}\footnotesize NL: naive last observation; RE: recent-window exponential; C: cosine turning-point; E: exponential; LD: logistic decline. The simple baselines achieved the lowest WIS in all 20 scenario--horizon cells. LD-LSQ was the best SEIR model in every cell.\end{minipage}
\end{table}

The rolling-origin analysis provides a stringent external benchmark for the mechanistic fits. A simple baseline achieved the smallest mean WIS in each of the 20 scenario--horizon cells: the naive last-observation model in 18 cells and the recent-window exponential baseline for Scenario 2 at horizons 1 and 2. Point-error rankings were similar, while logistic-decline LSQ achieved the lowest mean MAE for Scenario 2 at horizons 1 and 2. These benchmark results make the mechanistic comparisons deliberately conservative and strengthen the interpretation of any robust-calibration gains.

The RAPIDD exercise demonstrates an important feature of the proposed framework: it distinguishes anomaly-driven robustness gains from forecasting regimes dominated by sustained local trends. In this setting, LAD functions as a diagnostic sensitivity analysis that reveals how conclusions depend on residual weighting, while the rolling-origin benchmark identifies the best-performing loss and transmission structure for the observed forecasting regime.

Within the mechanistic family, logistic-decline LSQ had the smallest rolling-origin WIS and MAE in all 20 scenario--horizon cells. For each transmission driver, LSQ also had lower mean rolling-origin MAE and WIS than LAD at every scenario--horizon combination. Together with the Monte Carlo results, this finding demonstrates the phase- and mechanism-specific nature of robust calibration: LAD is especially valuable for short reporting anomalies, whereas the RAPIDD rolling windows often contain sustained local trends that favor mean-oriented calibration. The logistic-decline structure is well aligned with gradual strengthening of control. Relative to the naive baseline, the WIS ratio for logistic-decline LSQ ranges from 0.90 to 1.47 across the 20 cells (\Cref{tab:wis-skill-ratios}), including a value below 1 for Scenario 2 at horizon 1.

\begin{table}[H]
\centering
\caption{Rolling-origin WIS skill ratios for the strongest SEIR specification, logistic-decline LSQ, relative to the naive last-observation baseline. A ratio below 1 favors the SEIR model; a ratio above 1 quantifies its forecasting penalty relative to the naive baseline.}
\label{tab:wis-skill-ratios}
\begin{tabular}{lccccc}
\toprule
Scenario & $h=1$ & $h=2$ & $h=3$ & $h=4$ & $h=5$ \\
\midrule
Scenario 1 & 1.47 & 1.34 & 1.20 & 1.33 & 1.46 \\
Scenario 2 & 0.90 & 1.03 & 1.06 & 1.17 & 1.46 \\
Scenario 3 & 1.25 & 1.12 & 1.14 & 1.18 & 1.23 \\
Scenario 4 & 1.13 & 1.07 & 1.11 & 1.24 & 1.25 \\
\bottomrule
\end{tabular}
\end{table}

The rolling-origin MAE and WIS curves underlying these comparisons are displayed in \Cref{fig:rolling-mae,fig:rolling-wis}.

\begin{figure}[p]
\centering
\includegraphics[width=0.97\textwidth]{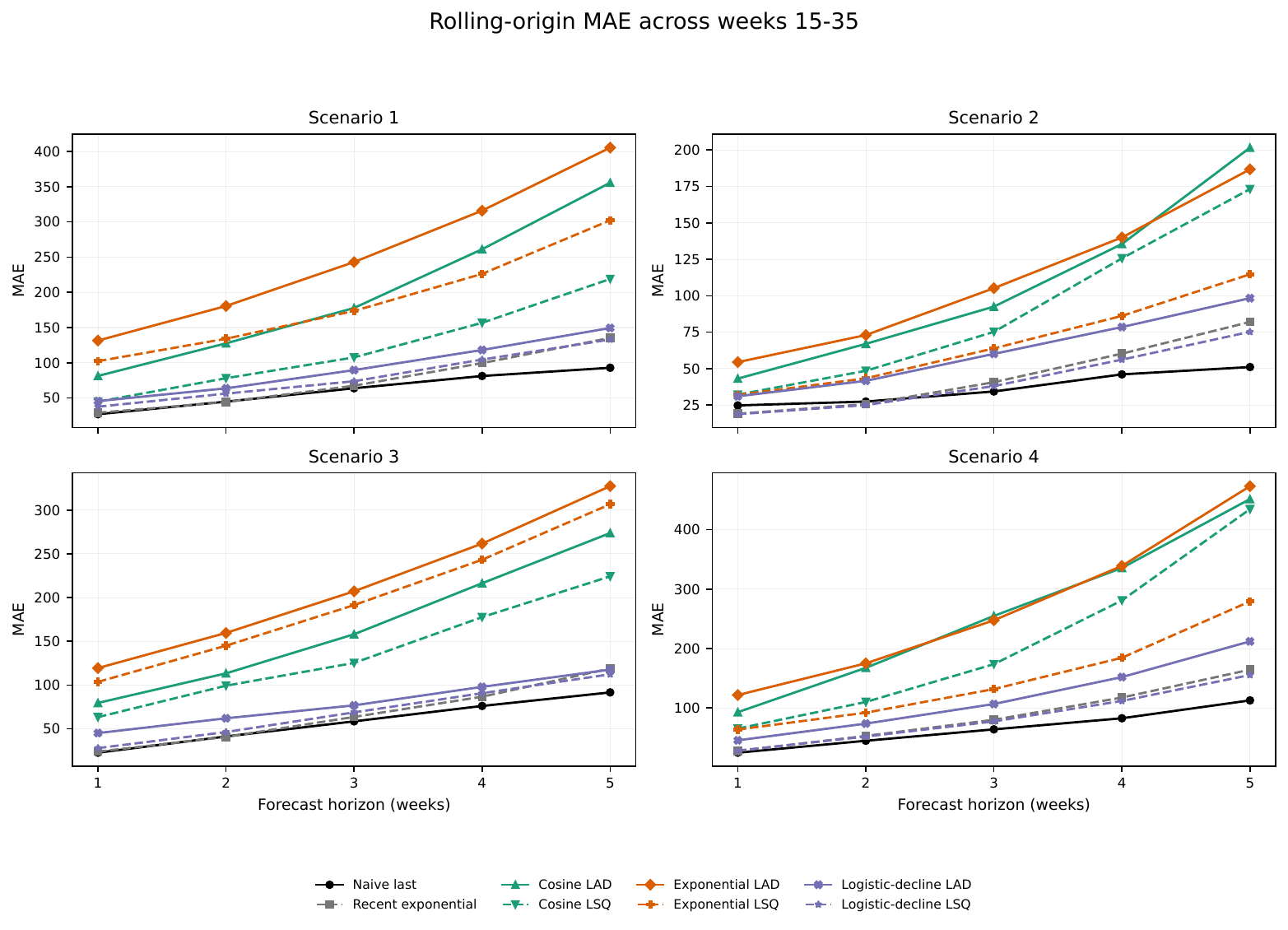}
\caption{Rolling-origin mean absolute error (MAE) by scenario and forecast horizon. Curves summarize origins 15--35. The naive and recent-window exponential baselines are included with the six SEIR specifications.}
\label{fig:rolling-mae}
\end{figure}

\begin{figure}[p]
\centering
\includegraphics[width=0.97\textwidth]{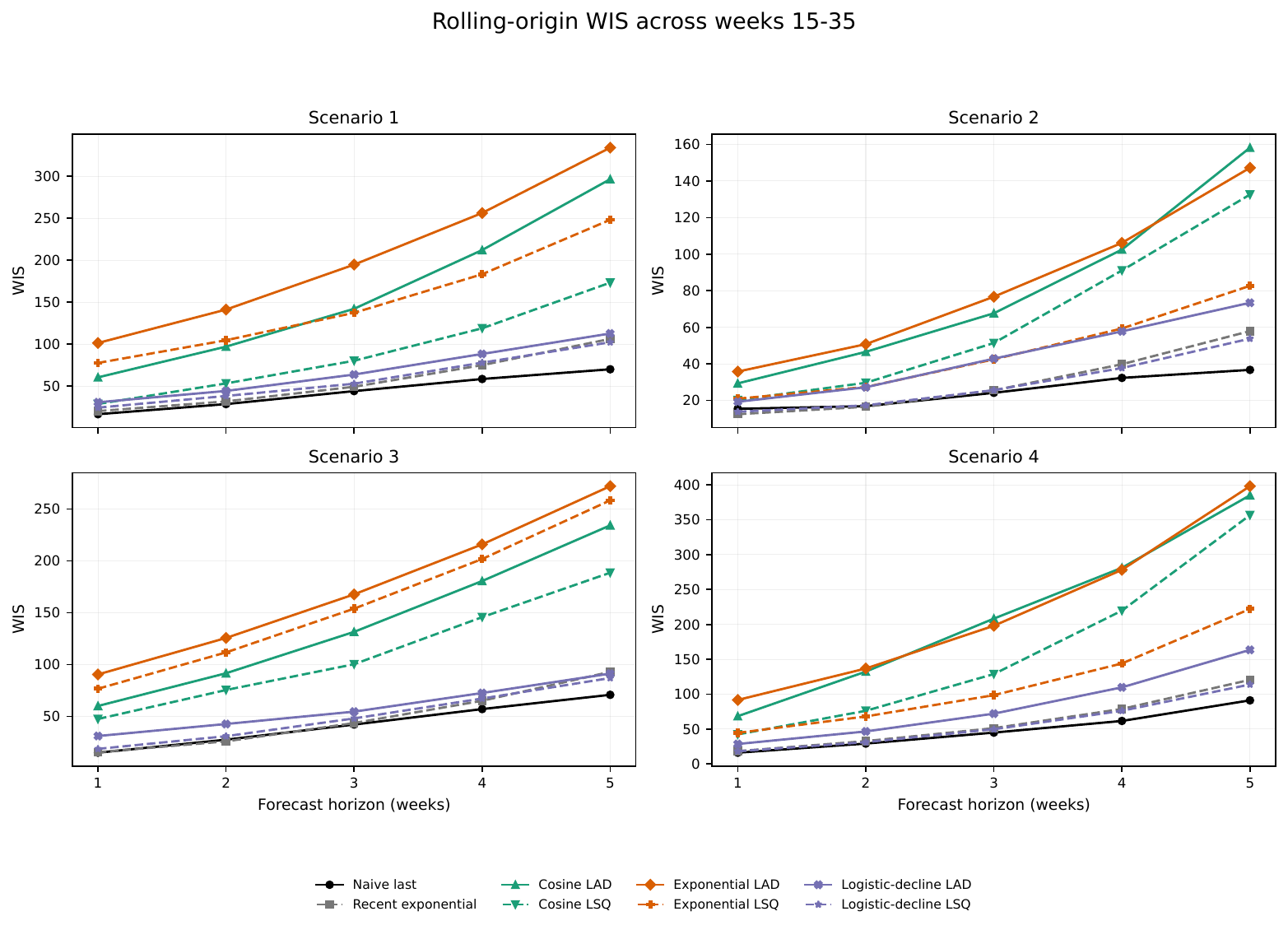}
\caption{Rolling-origin weighted interval score (WIS) by scenario and forecast horizon. Smaller values are preferred. Simple baselines achieved the smallest mean WIS in all 20 scenario--horizon cells, while logistic-decline LSQ was the best SEIR specification throughout.}
\label{fig:rolling-wis}
\end{figure}

\subsection{Prespecified fixed-origin example}
The fixed-origin analysis uses weeks 1--30 for calibration and weeks 31--35 for scoring. Its purpose is to show complete predictive trajectories under one prespecified origin and to support direct comparison of uncertainty procedures. It is not the sole basis for forecasting claims.

The best fixed-origin model varied by scenario and metric. Cosine LSQ had the lowest MAE and WIS in Scenario 1. Exponential LSQ had the lowest MAE and WIS in Scenario 2. The recent-window exponential baseline was best on both scores in Scenario 3. In Scenario 4, logistic-decline LSQ had the lowest MAE, while the naive last-observation forecast had the lowest WIS. Across matched transmission drivers, LSQ had lower fixed-origin MAE and WIS than LAD in all 12 driver--scenario comparisons. Complete values are reported in the supplement.

For clear visual comparison, the four fixed-origin scenario displays are presented separately with larger axes and a shared legend in \Cref{fig:fixed-s1,fig:fixed-s2,fig:fixed-s3,fig:fixed-s4}.

\begin{figure}[p]
\centering
\includegraphics[width=0.96\textwidth]{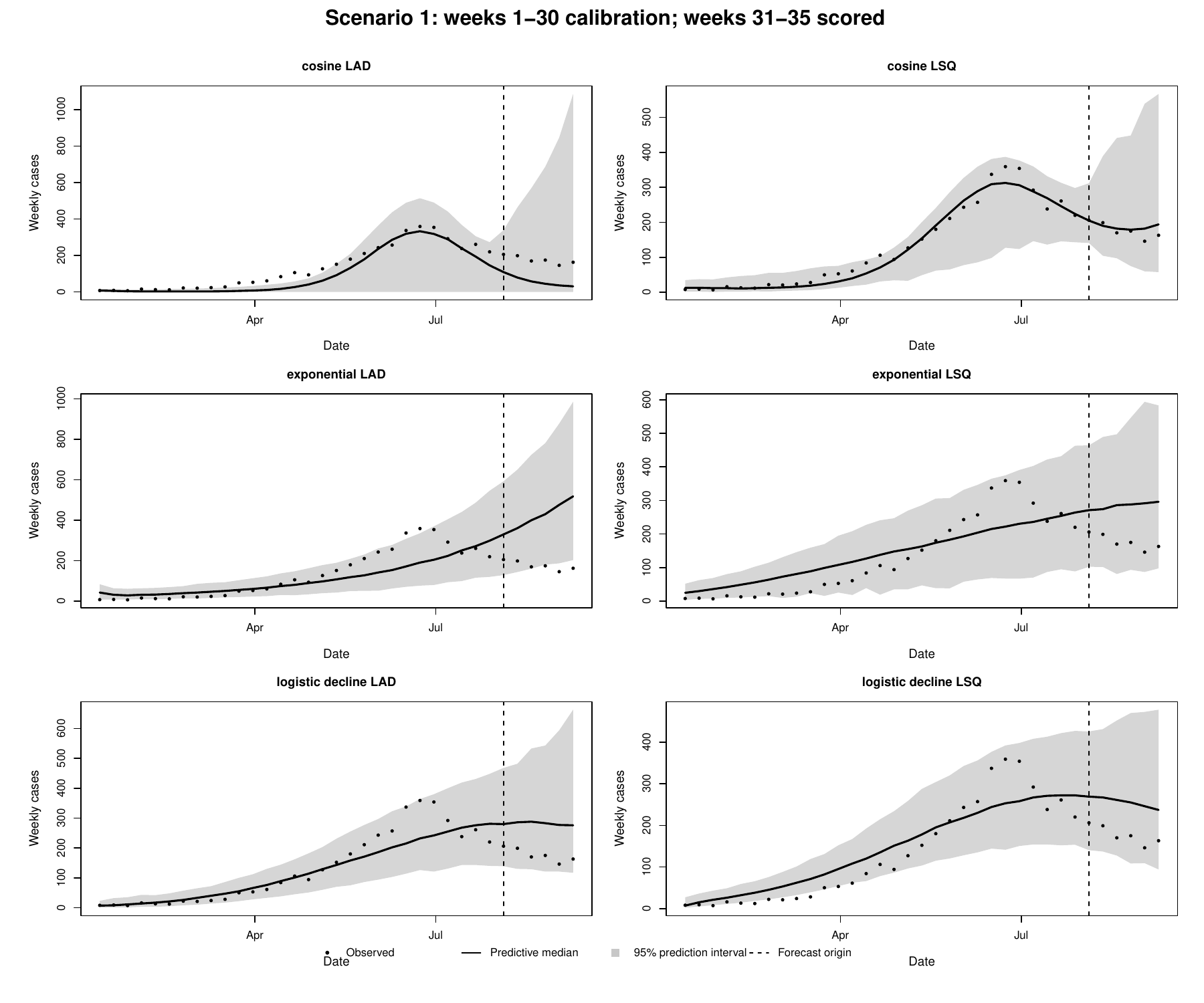}
\caption{Scenario 1: weeks 1--30 calibration and weeks 31--35 fixed-origin forecast comparison. Points are released weekly incidence; curves and intervals show the six fitted SEIR specifications. Baseline scores are reported in the accompanying tables.}
\label{fig:fixed-s1}
\end{figure}

\begin{figure}[p]
\centering
\includegraphics[width=0.96\textwidth]{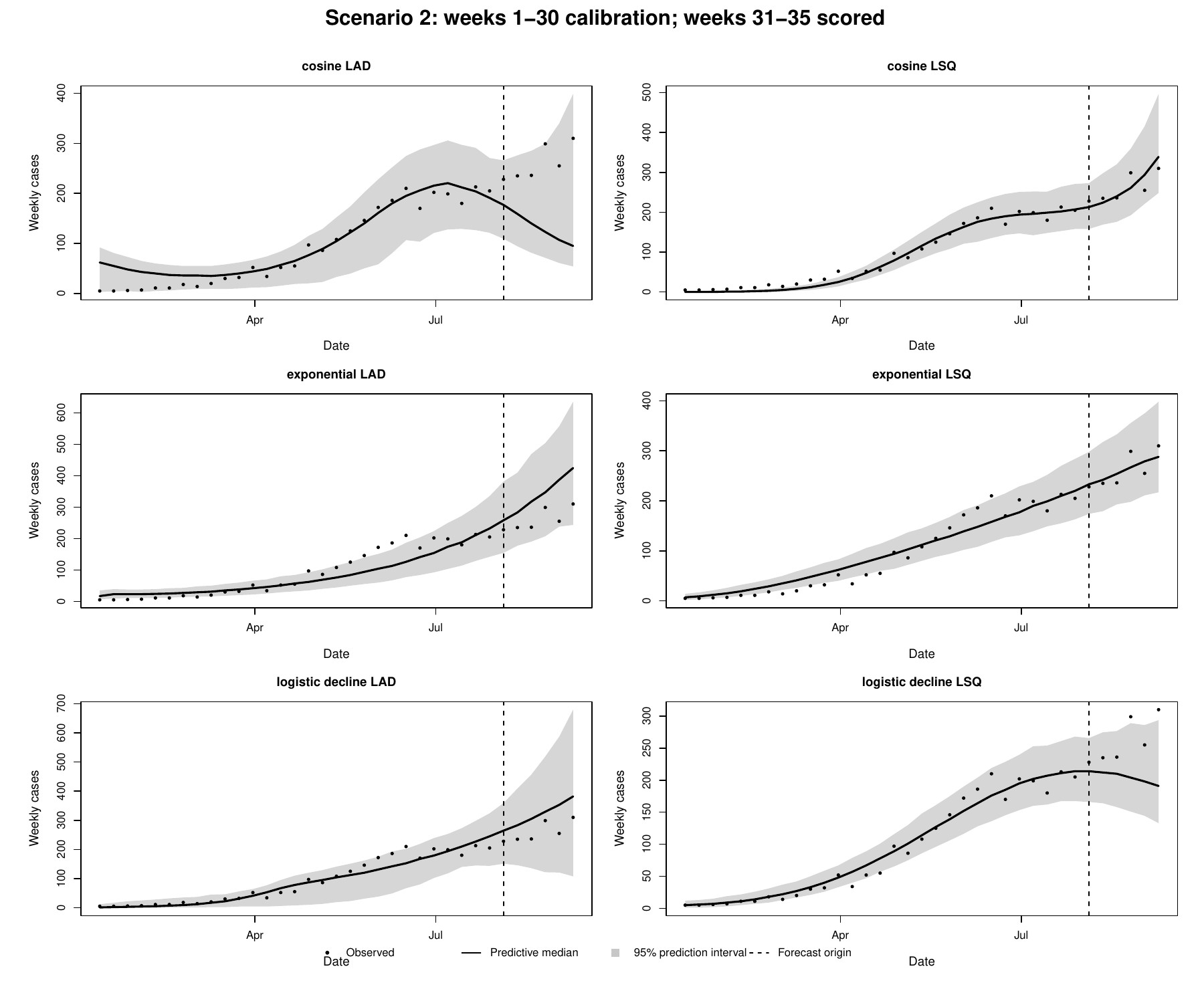}
\caption{Scenario 2: weeks 1--30 calibration and weeks 31--35 fixed-origin forecast comparison.}
\label{fig:fixed-s2}
\end{figure}

\begin{figure}[p]
\centering
\includegraphics[width=0.96\textwidth]{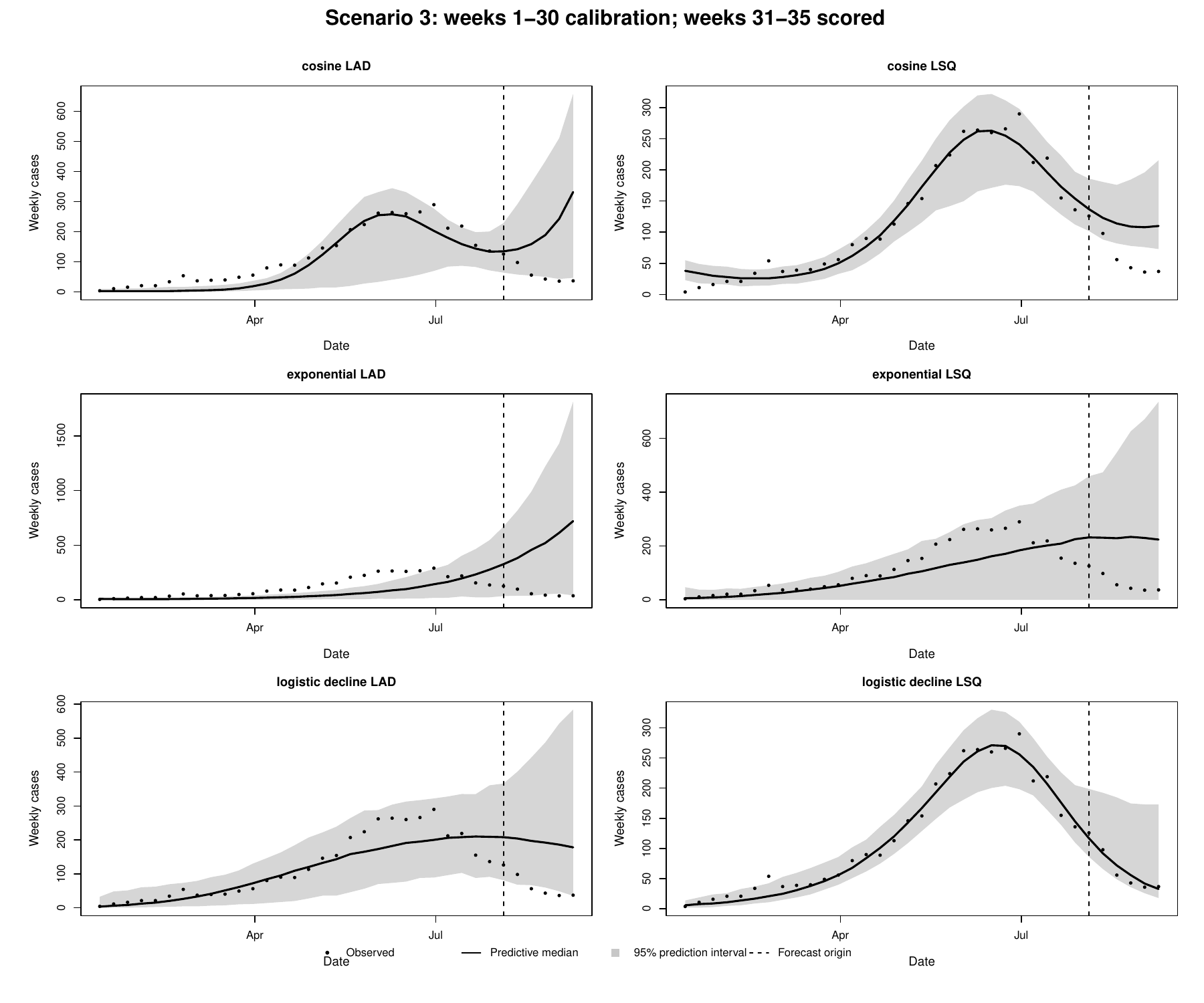}
\caption{Scenario 3: weeks 1--30 calibration and weeks 31--35 fixed-origin forecast comparison.}
\label{fig:fixed-s3}
\end{figure}

\begin{figure}[p]
\centering
\includegraphics[width=0.96\textwidth]{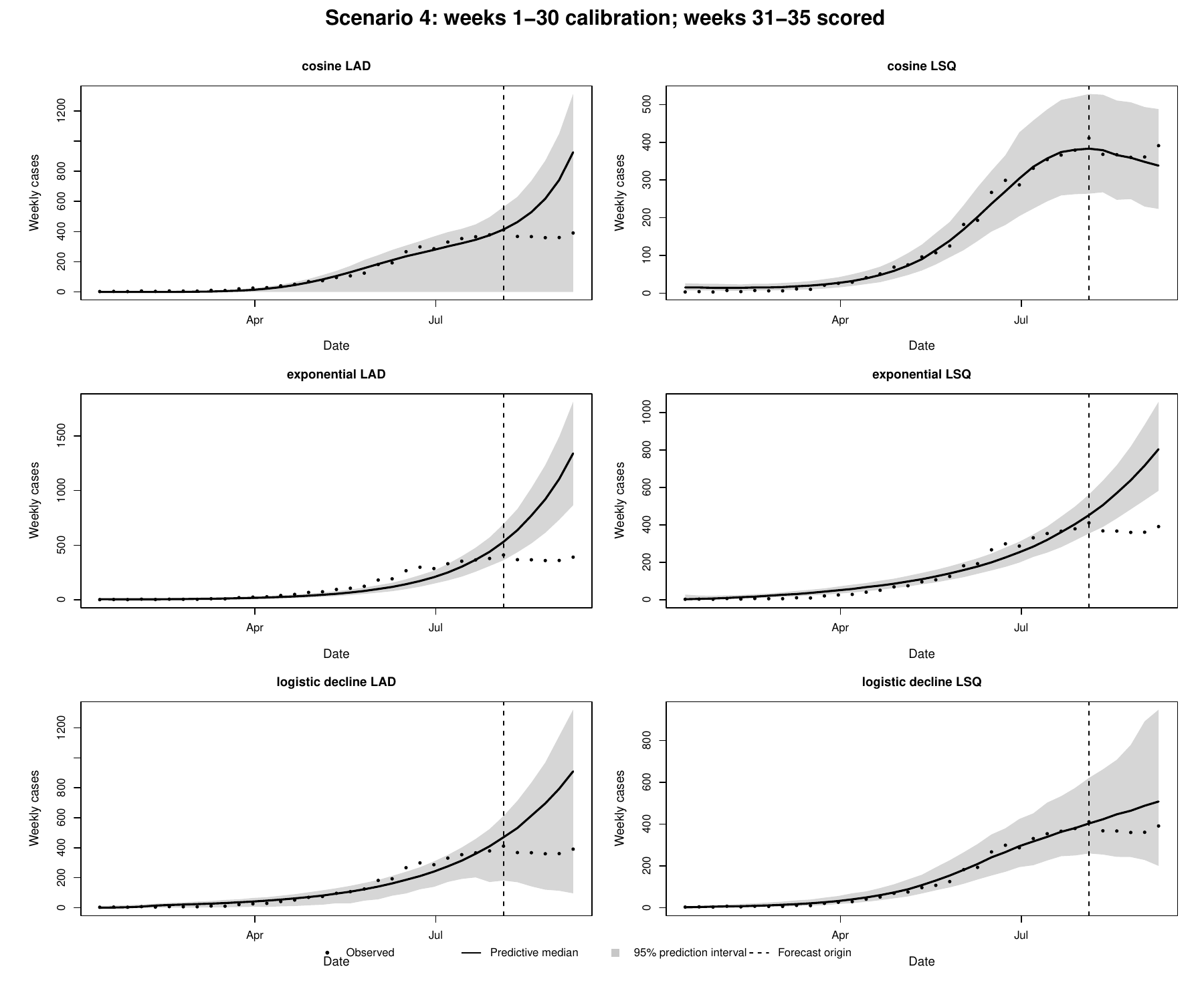}
\caption{Scenario 4: weeks 1--30 calibration and weeks 31--35 fixed-origin forecast comparison. The abrupt week-40 drop lies outside this scored window.}
\label{fig:fixed-s4}
\end{figure}

\section{Predictive uncertainty and identifiability diagnostics}
\label{sec:diagnostics}

\subsection{Uncertainty-method sensitivity}
The four fixed-origin bootstrap procedures produced materially different predictive distributions. Their averaged WIS, coverage, interval width, and minimum refit success are summarized in \Cref{tab:bootstrap-summary}.

\begin{table}[H]
\centering
\caption{Fixed-origin forecast uncertainty comparison, averaged over scenarios, transmission drivers, and losses. WIS denotes weighted interval score; Cov. denotes empirical coverage; Width denotes mean prediction-interval width; IID denotes independent and identically distributed residual resampling.}
\label{tab:bootstrap-summary}
\small
\begin{tabular}{@{}lrrrrrr@{}}
\toprule
Method & WIS & Cov.\ 80\% & Width 80\% & Cov.\ 95\% & Width 95\% & Min.\ refit success\\
\midrule
IID residual & 126.8 & 0.167 & 118.5 & 0.592 & 376.0 & 0.986\\
Wild & 121.4 & 0.175 & 138.7 & 0.708 & 394.9 & 0.991\\
Moving block & 130.5 & 0.192 & 124.6 & 0.642 & 365.5 & 0.992\\
Negative binomial & 107.6 & 0.383 & 209.1 & 0.767 & 455.4 & 0.988\\
\bottomrule
\end{tabular}
\end{table}

These averages are descriptive and should be interpreted with the scenario-specific coverage and randomized probability integral transform (PIT) diagnostics in the supplement, because uncertainty-method rankings vary by scenario and horizon. Averaged over scenarios, drivers, and losses, the negative-binomial parametric bootstrap achieved the lowest WIS and the highest 80\% and 95\% empirical coverage, with wider intervals reflecting its greater predictive dispersion. The wild bootstrap also improved 95\% coverage relative to IID residual resampling with a smaller increase in width. Horizon-specific diagnostics quantify where additional calibration is needed. These results show that the uncertainty mechanism is a substantive component of forecast evaluation and can change WIS rankings even when point forecasts are unchanged. The same uncertainty-method comparison is visualized in \Cref{fig:bootstrap-summary}.

\begin{figure}[t]
\centering
\includegraphics[width=0.90\textwidth]{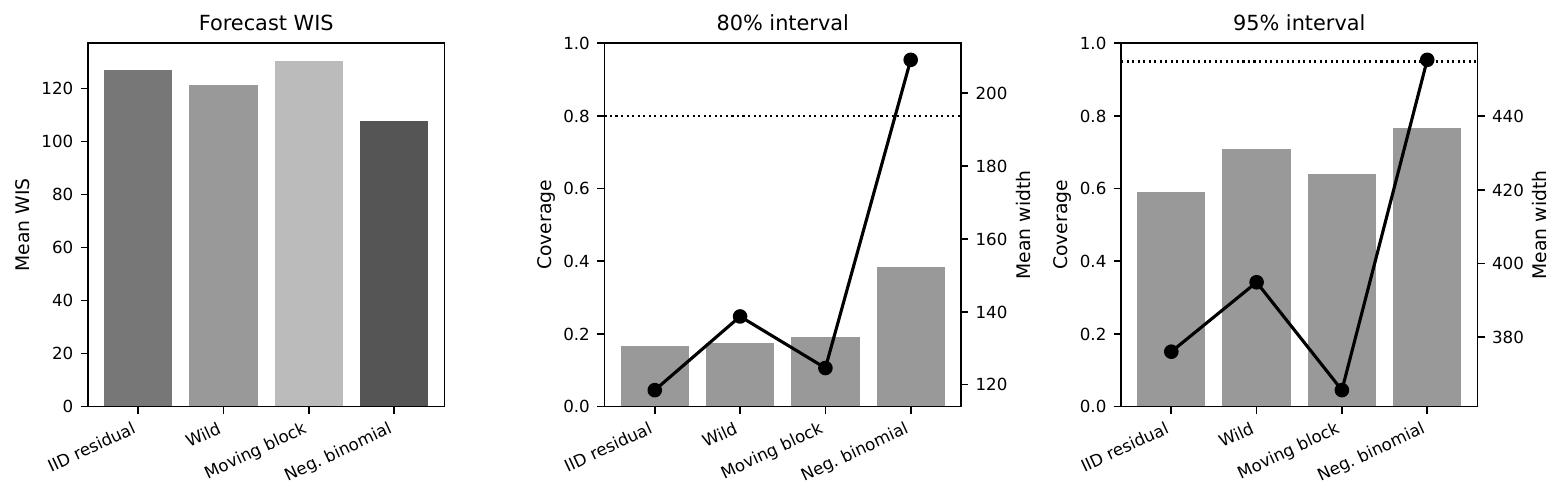}
\caption{Fixed-origin uncertainty-method comparison. WIS denotes weighted interval score; coverage is the empirical proportion of observations inside the stated predictive interval. Coverage and interval width are averaged over scenarios, transmission drivers, and loss functions.}
\label{fig:bootstrap-summary}
\end{figure}

Coverage-by-horizon and randomized PIT diagnostics are reported for each scenario in the supplement. The PIT distributions are used descriptively because the forecasts overlap across rolling origins and are not independent. Departures from uniformity, especially concentration near 0 or 1, identify systematic under- or overprediction that aggregate WIS alone can conceal.

\subsection{Sampled-output identifiability}
Sensitivity diagnostics were computed for all 24 scenario--driver--loss SEIR fits. The corresponding numerical-rank and condition-number summaries are reported in \Cref{tab:identifiability-summary}.

\begin{table}[t]
\centering
\caption{Sampled-output identifiability diagnostics for the 24 primary RAPIDD SEIR fits.}
\label{tab:identifiability-summary}
\small
\begin{tabular}{@{}lccc@{}}
\toprule
Transmission driver & Full numerical rank & Median $\log_{10}$ condition number & Range\\
\midrule
Cosine turning-point & 8/8 & 5.51 & 4.04--7.55\\
Exponential & 6/8 & 6.95 & 5.96--8.62\\
Logistic decline & 3/8 & 8.15 & 7.25--8.49\\
\bottomrule
\end{tabular}
\end{table}

The scaled sensitivity matrix was numerically full rank in 17 of 24 fits. All eight cosine fits were full rank, compared with six of eight exponential fits and three of eight logistic-decline fits. Among full-rank fits, condition numbers ranged from approximately $1.1\times10^4$ to $4.1\times10^8$, providing a quantitative scale for practical identifiability. The median condition number was about $3.2\times10^5$ for cosine, $8.9\times10^6$ for exponential, and $1.4\times10^8$ for logistic decline.

Profile objectives and a sensitivity analysis fixing $\sigma$ and $\gamma$ to Ebola-informed values are reported in the supplement. These diagnostics identify which parameters have localized objective support and which benefit from additional structural information. We therefore pair trajectory and forecast interpretation with sensitivity rank, conditioning, and profile localization, yielding an explicit criterion for stronger mechanistic interpretation of $\beta_0$, $\sigma$, $\gamma$, $E_0$, and $I_0$.

\section{Discussion}
\label{sec:discussion}
Calibration loss is part of the observation and validation strategy, not merely a numerical choice. Accordingly, we evaluate LAD as a robust, prespecified sensitivity-analysis option for time-varying SEIR models and characterize the settings in which it adds the greatest value alongside LSQ. The distinction is substantive. LAD and LSQ target different features of the conditional distribution, and their relative performance changes with the score, epidemic phase, reporting mechanism, transmission structure, and forecast horizon.

The phase-aware simulation identifies both the settings and the magnitude of the robustness gain. Across isolated spikes, backlog release, and temporary underreporting with delayed release, LAD was favored in 106 of 144 MAE comparisons and 104 of 144 WIS comparisons. Pooled across drivers, horizons, and replications, the median relative change in MAE during early growth was $-72.9\%$ under backlog release and $-84.2\%$ under temporary underreporting with delayed release. These mechanisms create influential residuals whose squared contribution can dominate LSQ. Clean negative-binomial and serially correlated settings serve as useful reference regimes, showing that the robustness gain is concentrated rather than automatic. The simulation therefore yields a clear practical rule: LAD adds the greatest value when isolated reporting anomalies would otherwise dominate squared-error calibration, especially during early growth and near the peak.

\subsection{Interpreting phase- and mechanism-dependent performance}
The Monte Carlo and RAPIDD findings are complementary because their designs emphasize different reporting regimes. The simulation deliberately isolates short, controlled reporting anomalies--isolated spikes, backlog release, and temporary underreporting with delayed release--so a few observations can exert disproportionate leverage under squared loss. LAD downweights those residuals and can improve subsequent forecasts. Many RAPIDD rolling windows contain sustained local rises or declines rather than isolated perturbations, so large recent residuals can carry information about continuing growth or changing control. This distinction explains why LAD often improves anomaly-focused simulations while LSQ is preferred within every transmission family in the RAPIDD rolling-origin analysis, and it demonstrates that the framework can identify the loss function most aligned with the active forecasting regime.

The RAPIDD results strengthen the methodological evaluation by imposing demanding external benchmarks. Across rolling origins, simple baselines had the smallest WIS in every scenario--horizon cell, and logistic-decline LSQ was consistently the strongest SEIR specification. The framework therefore separates epidemiological interpretability from short-horizon predictive ranking and embeds baseline comparison directly into mechanistic forecast assessment.

Transmission structure was at least as important as the loss function. The cosine curve is useful as a finite-window phenomenological turning-point function. The exponential curve is parsimonious but restrictive when data contain a peak and decline. The logistic-decline form directly represents gradual strengthening of control and provided the strongest SEIR rolling forecasts. Its additional parameters also produced larger condition numbers, illustrating the value of reporting predictive performance together with identifiability diagnostics. Predictive adequacy and mechanistic identifiability are distinct goals.

The uncertainty analysis also changes the interpretation of WIS. IID residual resampling assumes exchangeability and cannot represent mean-dependent variance or serial dependence. Wild and block procedures address different departures, but neither was uniformly superior. Negative-binomial sampling produced the best average WIS and coverage, with wider intervals reflecting a sharper trade-off between dispersion and calibration. The results reinforce the value of reporting point-error rankings together with calibrated predictive uncertainty.

The large-sample results specialize standard LAD arguments to differentiable ODE-implied interval outputs and establish existence, consistency, and asymptotic normality under an idealized increasing-reporting design with interior parameters, independent errors, stable local identification, and a positive error density at zero. The finite RAPIDD experiments, with 15–35 calibration observations, complement the theory by showing how the estimator behaves in short surveillance series and by using sensitivity matrices and profile objectives to quantify practical identification directly.

The broader methodological implication is that robustness cannot be judged independently of the estimand, reporting mechanism, and validation score. Although the numerical experiments are epidemic-specific, the workflow is portable to other nonlinear dynamical models: align model output with the observation interval; compare robust and nonrobust losses under plausible anomalies; construct uncertainty on the observation scale; benchmark forecasts out of sample; and pair mechanistic interpretation with explicit identifiability diagnostics. Material differences between LAD and a primary LSQ or count-likelihood analysis become informative sensitivity evidence about dependence on the observation model.

\section{Conclusion}
\label{sec:conclusion}
Least absolute deviations provides a theoretically grounded and computationally stable robustness analysis for interval-incidence calibration of time-varying SEIR models. Across the full phase-aware Monte Carlo design, LAD was favored in 56.7\% of MAE comparisons and 57.9\% of WIS comparisons; under isolated spikes, backlog release, and temporary underreporting with delayed release, these proportions increased to 73.6\% and 72.2\%, respectively. Near the epidemic peak, LAD was favored in 66.7\% of MAE and 71.7\% of WIS comparisons. All 120,000 primary simulation fits achieved nominal optimizer convergence. Rolling-origin RAPIDD validation further demonstrates that the framework can distinguish anomaly-driven robustness gains from sustained-trend forecasting regimes: logistic-decline LSQ was the strongest SEIR specification, while simple baselines provided a stringent overall WIS benchmark.

The practical implication is to evaluate loss function, transmission structure, uncertainty mechanism, and baseline forecasts jointly at epidemiologically relevant origins and horizons. LAD is especially compelling when a small number of reporting anomalies would otherwise dominate squared-error calibration, and its comparison with LSQ becomes an informative diagnostic when the active reporting regime is uncertain. Prespecifying LAD as a robustness analysis, then validating both point and probabilistic forecasts out of sample and pairing them with identifiability diagnostics, provides a rigorous and transferable workflow for nonlinear dynamical models. The accompanying R package, public data and code, automated validation, and interactive application make the full workflow directly reproducible and reusable.

\section*{Data, code, software, and interactive application availability}
The released RAPIDD scenario data, installable R package, full analysis code, fixed settings, deterministic seed scheme, processed results, tables, figures, automated validation, and detailed reproducibility instructions are publicly available at \url{https://github.com/YisaAdeniyiAbolade/lad-seir-calibration}. 

\section*{CRediT authorship contribution statement}
The author contributed to the conception, methodological development, analysis, interpretation, manuscript preparation, and critical review of the work. 

\section*{Declaration of competing interest}
The author declare that they have no known competing financial interests or personal relationships that could have appeared to influence the work reported in this paper.

\section*{Funding}
This research did not receive any specific grant from funding agencies in the public, commercial, or not-for-profit sectors.

\clearpage
\setcounter{figure}{0}
\setcounter{table}{0}
\setcounter{equation}{0}
\renewcommand{\theHfigure}{supp.\arabic{figure}}
\renewcommand{\theHtable}{supp.\arabic{table}}
\renewcommand{\theHequation}{supp.\arabic{equation}}
\newcommand{\coveragefig}[1]{\ifcase#1\or Figure_S1.pdf\or Figure_S3.pdf\or Figure_S5.pdf\or Figure_S7.pdf\fi}
\newcommand{\pitfig}[1]{\ifcase#1\or Figure_S2.pdf\or Figure_S4.pdf\or Figure_S6.pdf\or Figure_S8.pdf\fi}
\newcommand{\profilefig}[2]{\ifcase#1\or Figure_S9#2.pdf\or Figure_S10#2.pdf\or Figure_S11#2.pdf\or Figure_S12#2.pdf\fi}

\begin{center}
{\LARGE\bfseries Supplementary Material for Calibration of Time-Varying SEIR Models\par}
\vspace{1.2em}
{\large Yisa Abolade\textsuperscript{} \quad\textsuperscript{} \quad\textsuperscript{}\par}
\vspace{0.6em}
{\small
\textsuperscript{}Department of Mathematics and Statistics, Georgia State University, Atlanta, Georgia, USA\\
\texttt{yabolade1@gsu.edu} \quad \texttt{} \quad \texttt{}\\[0.4em]
\textsuperscript{} \texttt{}}
\end{center}
\vspace{1em}

\begin{abstract}
Complete proofs, numerical results, uncertainty and identifiability diagnostics, and fixed computational specifications supporting the main manuscript are provided here.
\end{abstract}
\noindent This supplementary material provides the complete proofs, full numerical results underlying the main-text summaries, uncertainty and identifiability diagnostics, and the fixed reproducibility specification.

\section*{Supplement roadmap}
\begin{itemize}
\item Appendix A contains the complete large-sample proofs.
\item Appendix B.1 describes the RAPIDD series and exact fixed- and rolling-origin windows.
\item Appendix B.2 gives the complete phase-aware simulation design, full LAD--LSQ paired results, and relative-difference summaries.
\item Appendix B.3 reports all rolling-origin and fixed-origin RAPIDD scores, including WIS skill ratios relative to the naive baseline.
\item Appendix B.4 contains scenario- and horizon-specific coverage and randomized PIT diagnostics.
\item Appendix B.5 reports uncertainty-method comparisons and bootstrap parameter intervals.
\item Appendix B.6 contains sensitivity singular values, condition numbers, profile objectives, and fixed-$\sigma$/$\gamma$ analyses.
\item Appendix C gives parameter bounds, starting-value rules, seed construction, optimizer settings, and software versions.
\end{itemize}
\appendix
\section{Proofs of the large-sample results}
\label{app:proofs}

\subsection{Finite-horizon global well-posedness}
\begin{proof}[Proof of the finite-horizon global well-posedness theorem]
Let $F(t,x;\theta)$ denote the right-hand side of the four SEIR state equations. Since $\beta(\cdot;\theta)$ is continuous on a finite interval, it is bounded. On every bounded subset of $\R^4$, the map $x\mapsto F(t,x;\theta)$ is Lipschitz uniformly in $t$; the only nonlinear term satisfies
\[
\abs{SI-\widetilde S\widetilde I}
\le \abs{S-\widetilde S}\abs{I}
+\abs{\widetilde S}\abs{I-\widetilde I}.
\]
The Picard--Lindelof theorem therefore gives a unique maximal local solution.

Summing the compartment equations gives
\[
\frac{d}{dt}(S+E+I+R)=0,
\]
so total population remains $N$. The vector field is quasi-positive on the boundary of the nonnegative orthant:
\[
S=0\Rightarrow\dot S=0,\quad
E=0\Rightarrow\dot E=\beta SI/N\ge0,
\]
\[
I=0\Rightarrow\dot I=\sigma E\ge0,\quad
R=0\Rightarrow\dot R=\gamma I\ge0.
\]
A trajectory starting in the epidemiological simplex cannot cross into a negative coordinate, and every state remains bounded by $N$. The solution therefore stays in a compact set on which the vector field is bounded and locally Lipschitz. The continuation theorem rules out finite-time blow-up and extends the solution through the complete finite horizon. Finally,
\[
0\le C(t)-C(0)=\int_0^t\sigma E(u)\,du\le \sigma Nt,
\]
so $C$ is nondecreasing and finite.
\end{proof}

\subsection{Parameter differentiability}
\begin{proof}[Proof of the parameter differentiability theorem]
Write the augmented state as $z=(S,E,I,R,C)\trans$ and its vector field as $G(t,z,\theta)$. The preceding result confines trajectories to a compact set. Continuous differentiability of $G$ and of the initial-condition map permits differentiation of
\[
z(t;\theta)=z_0(\theta)+\int_0^tG\{u,z(u;\theta),\theta\}\,du.
\]
The sensitivity matrix $Z_\theta(t)=\partial z(t;\theta)/\partial\theta\trans$ satisfies
\begin{equation}
\dot Z_\theta(t)=G_z\{t,z(t;\theta),\theta\}Z_\theta(t)
+G_\theta\{t,z(t;\theta),\theta\},
\qquad
Z_\theta(0)=\frac{\partial z_0(\theta)}{\partial\theta\trans}.
\label{eq:sensitivity-supp}
\end{equation}
This linear nonautonomous system has a unique solution because its coefficients are continuous and bounded. If $C_\theta(t)$ is the row corresponding to the incidence accumulator, then
\[
\frac{\partial\mu_j(\theta)}{\partial\theta\trans}
=C_\theta(\tau_j)-C_\theta(\tau_{j-1}).
\]
Continuity and uniformity follow from compactness of the parameter and time domains.
\end{proof}

\subsection{Existence, uniform convergence, and consistency}
\begin{proof}[Existence of constrained minimizers]
Every interval output is continuous in $\theta$, hence both $\abs{Y_j-\mu_j(\theta)}$ and $\{Y_j-\mu_j(\theta)\}^2$ are continuous. Their finite sums are continuous on compact $\mathcal H$, so LAD and LSQ minimizers exist by the Weierstrass theorem.
\end{proof}

\begin{proof}[Uniform convergence of the normalized LAD criterion]
Let
\[
d_{Tj}(\theta)=\mu_j(\theta_0)-\mu_j(\theta),\qquad
Z_{Tj}(\theta)=\abs{\varepsilon_{Tj}+d_{Tj}(\theta)}.
\]
The reverse triangle inequality and uniform Lipschitz assumption give
\begin{equation}
\abs{Z_{Tj}(\theta)-Z_{Tj}(\vartheta)}
\le K\norm{\theta-\vartheta}.
\label{eq:equi-supp}
\end{equation}
The same inequality holds after expectation. Fix $\delta>0$ and let $\{\theta_1,\ldots,\theta_M\}$ be a finite $\delta$-net of $\mathcal H$. For each $\theta$, choose a net point within $\delta$. Then
\begin{align*}
\sup_{\theta\in\mathcal H}\abs{\bar L_T(\theta)-Q_T(\theta)}
&\le 2K\delta\\
&\quad+\max_{1\le k\le M}
\abs{T^{-1}\sum_{j=1}^T
[Z_{Tj}(\theta_k)-\E Z_{Tj}(\theta_k)]}.
\end{align*}
Bounded model output and the uniform second-moment assumption imply a uniformly bounded variance for each net-point summand. Independence gives variance $O(T^{-1})$ for each average. Chebyshev's inequality and a union bound over the fixed finite net make the maximum $o_p(1)$. Let $T\to\infty$ and then $\delta\downarrow0$.
\end{proof}

\begin{proof}[Consistency of the LAD minimizer]
For fixed $\epsilon>0$, define
\[
\Delta_{T,\epsilon}=\inf_{\theta\in\mathcal H:\norm{\theta-\theta_0}\ge\epsilon}
\{Q_T(\theta)-Q_T(\theta_0)\}.
\]
Uniform separation gives $\Delta_{T,\epsilon}\ge\Delta_\epsilon>0$ for all sufficiently large $T$. On the event
\[
\sup_{\theta\in\mathcal H}\abs{\bar L_T(\theta)-Q_T(\theta)}<\Delta_\epsilon/3,
\]
every $\theta$ outside the $\epsilon$-ball has empirical criterion strictly larger than the criterion at $\theta_0$. Uniform convergence makes this event have probability tending to one. Hence every measurable minimizer converges in probability to $\theta_0$.
\end{proof}

\subsection{Asymptotic normality}
\begin{proof}[Proof of asymptotic normality]
For fixed $h\in\R^p$, let $\theta=\theta_0+h/\sqrt T$. Uniform differentiability gives
\[
\mu_j(\theta)-\mu_j(\theta_0)
=J_{Tj}\trans h/\sqrt T+q_{Tj}(h),
\]
where the aggregate remainder is $o_p(1)$ uniformly on compact sets. Knight's identity states that
\begin{equation}
\abs{u-v}-\abs{u}
=-v\operatorname{sign}(u)
+2\int_0^v\{\ind(u\le s)-\ind(u\le0)\}\,ds.
\label{eq:knight-supp}
\end{equation}
Applying \eqref{eq:knight-supp} with $u=\varepsilon_{Tj}$ and $v=J_{Tj}\trans h/\sqrt T$, the linear terms equal $-h\trans Z_T$, where
\[
Z_T=T^{-1/2}\sum_{j=1}^TJ_{Tj}\operatorname{sign}(\varepsilon_{Tj}).
\]
Median zero and absence of a point mass at zero imply mean-zero signs with variance one. The deterministic-array Lindeberg condition and $A_T\to A$ yield $Z_T\Rightarrow N(0,A)$.

Continuity of the error density at zero gives, uniformly for small $v$,
\[
\E\left[2\int_0^v\{\ind(\varepsilon\le s)-\ind(\varepsilon\le0)\}\,ds\right]
=f(0)v^2+o(v^2).
\]
The maximal-Jacobian and local-remainder assumptions make the centered aggregate fluctuation $o_p(1)$. The local objective therefore converges uniformly on compact sets to
\[
\mathcal V(h)=-h\trans Z+f(0)h\trans Ah,
\qquad Z\sim N(0,A).
\]
Since $A$ is positive definite, the unique minimizer is $h^*=(2f(0)A)^{-1}Z$. The convex argmin theorem and consistency give
\[
\sqrt T(\widehat\theta_T-\theta_0)\Rightarrow h^*,
\]
whose covariance is $(4f(0)^2)^{-1}A^{-1}$.
\end{proof}

\section{Complete numerical results}
\label{app:numerical}

\subsection{RAPIDD series and analysis windows}
Table~\ref{tab:s-scenario-summary} records the complete released-series summaries and explicitly separates the fixed calibration, scoring, and unused weeks.
\begin{table}[htbp]
\centering
\caption{RAPIDD scenario series and fixed-origin windows.}
\label{tab:s-scenario-summary}
\scriptsize

\end{table}

\subsection{Phase-aware simulation design and full paired results}
The full 60-condition design is listed first, followed by parameter-recovery summaries and all paired phase--mechanism--driver--horizon differences.
%
\begin{table}[htbp]
\centering
\caption{Parameter-recovery summary across simulation conditions.}
\label{tab:s-param-summary}
\small
%
\end{table}
\FloatBarrier
\begin{table}[p]
\centering
\small
\caption{Magnitude of paired LAD-versus-LSQ forecast differences. Entries are the median percentage change in MAE, $100(\mathrm{MAE}_{\mathrm{LAD}}-\mathrm{MAE}_{\mathrm{LSQ}})/\mathrm{MAE}_{\mathrm{LSQ}}$, with the interquartile range (IQR) in parentheses, pooled over drivers, horizons, and replications. Negative values favor LAD. Ratios are omitted when the matched LSQ MAE is zero (416 of 240,000 comparisons).}
\label{tab:s-relative-mae}
%
\end{landscape}

\clearpage
\subsection{Complete rolling-origin and fixed-origin results}
Tables~\ref{tab:s-rolling-full} and \ref{tab:s-fixed-full} report all point and interval scores used to construct the concise main-text winner summaries.
\begin{table}[H]
\centering
\caption{Rolling-origin WIS skill ratios for the strongest SEIR specification, logistic-decline LSQ, relative to the naive last-observation baseline. A ratio below 1 favors the SEIR model; a ratio above 1 quantifies its forecasting penalty relative to the naive baseline.}
\label{tab:s-wis-skill-ratios}
\begin{tabular}{lccccc}
\toprule
Scenario & $h=1$ & $h=2$ & $h=3$ & $h=4$ & $h=5$ \\
\midrule
Scenario 1 & 1.47 & 1.34 & 1.20 & 1.33 & 1.46 \\
Scenario 2 & 0.90 & 1.03 & 1.06 & 1.17 & 1.46 \\
Scenario 3 & 1.25 & 1.12 & 1.14 & 1.18 & 1.23 \\
Scenario 4 & 1.13 & 1.07 & 1.11 & 1.24 & 1.25 \\
\bottomrule
\end{tabular}
\end{table}
\begin{landscape}
\scriptsize
\begin{longtable}{@{}lllrrrrrrrrrr@{}}
\caption{Complete rolling-origin summaries across origins 15--35. MAE denotes mean absolute error; WIS denotes weighted interval score; Cov denotes empirical coverage; W denotes mean interval width; and $h$ denotes forecast horizon.}\label{tab:s-rolling-full}\\
\toprule
Scenario & Model & $h$ & MAE & WIS & Cov50 & W50 & Cov80 & W80 & Cov90 & W90 & Cov95 & W95\\
\midrule
\endfirsthead
\multicolumn{13}{c}{\tablename\ \thetable\ (continued)}\\
\toprule
Scenario & Model & $h$ & MAE & WIS & Cov50 & W50 & Cov80 & W80 & Cov90 & W90 & Cov95 & W95\\
\midrule
\endhead
\bottomrule
\endfoot
scenario1 & cosine\_LAD & 1 & 81.33 & 60.51 & 0.381 & 60.41 & 0.524 & 124.3 & 0.524 & 167.4 & 0.667 & 204.7\\
scenario1 & cosine\_LSQ & 1 & 44.79 & 28.81 & 0.571 & 56.30 & 0.714 & 111.5 & 0.810 & 146.7 & 0.905 & 180.5\\
scenario1 & exponential\_LAD & 1 & 131.6 & 101.5 & 0.286 & 77.51 & 0.381 & 157.4 & 0.571 & 210.7 & 0.571 & 261.8\\
scenario1 & exponential\_LSQ & 1 & 102.3 & 77.71 & 0.286 & 60.83 & 0.333 & 127.0 & 0.476 & 172.2 & 0.476 & 212.3\\
scenario1 & logistic\_decline\_LAD & 1 & 45.64 & 30.81 & 0.476 & 62.02 & 0.667 & 123.5 & 0.810 & 162.8 & 0.952 & 201.4\\
scenario1 & logistic\_decline\_LSQ & 1 & 37.64 & 24.44 & 0.429 & 55.34 & 0.714 & 109.0 & 0.762 & 145.3 & 0.905 & 176.4\\
scenario1 & naive\_last & 1 & 27.00 & 16.62 & 0.381 & 43.27 & 0.905 & 88.66 & 0.952 & 119.7 & 1.000 & 148.2\\
scenario1 & recent\_exponential & 1 & 29.05 & 20.33 & 0.667 & 51.47 & 0.857 & 101.2 & 0.905 & 132.1 & 0.905 & 161.0\\
scenario1 & cosine\_LAD & 2 & 127.6 & 97.12 & 0.238 & 69.75 & 0.429 & 143.4 & 0.429 & 191.3 & 0.476 & 235.0\\
scenario1 & cosine\_LSQ & 2 & 78.10 & 53.32 & 0.238 & 63.24 & 0.571 & 123.7 & 0.667 & 161.0 & 0.714 & 197.9\\
scenario1 & exponential\_LAD & 2 & 180.6 & 141.2 & 0.238 & 87.60 & 0.381 & 181.3 & 0.429 & 242.2 & 0.524 & 298.5\\
scenario1 & exponential\_LSQ & 2 & 134.1 & 104.7 & 0.238 & 67.74 & 0.333 & 141.9 & 0.429 & 191.3 & 0.429 & 237.4\\
scenario1 & logistic\_decline\_LAD & 2 & 63.71 & 44.34 & 0.429 & 66.01 & 0.571 & 131.6 & 0.619 & 173.9 & 0.714 & 212.8\\
scenario1 & logistic\_decline\_LSQ & 2 & 56.24 & 38.34 & 0.381 & 58.65 & 0.524 & 116.4 & 0.571 & 152.9 & 0.762 & 184.9\\
scenario1 & naive\_last & 2 & 44.83 & 28.66 & 0.286 & 42.82 & 0.571 & 89.04 & 0.667 & 119.8 & 0.714 & 148.4\\
scenario1 & recent\_exponential & 2 & 44.24 & 31.87 & 0.571 & 56.76 & 0.810 & 111.5 & 0.857 & 144.8 & 0.857 & 174.4\\
scenario1 & cosine\_LAD & 3 & 177.8 & 142.2 & 0.238 & 78.21 & 0.333 & 162.8 & 0.381 & 218.9 & 0.381 & 269.0\\
scenario1 & cosine\_LSQ & 3 & 107.6 & 80.28 & 0.286 & 68.72 & 0.429 & 134.7 & 0.476 & 175.5 & 0.571 & 214.7\\
scenario1 & exponential\_LAD & 3 & 243.0 & 194.8 & 0.095 & 99.60 & 0.238 & 204.1 & 0.381 & 273.3 & 0.429 & 334.8\\
scenario1 & exponential\_LSQ & 3 & 173.6 & 137.4 & 0.190 & 77.79 & 0.286 & 158.5 & 0.333 & 213.9 & 0.429 & 266.1\\
scenario1 & logistic\_decline\_LAD & 3 & 89.62 & 63.86 & 0.333 & 70.86 & 0.429 & 141.2 & 0.524 & 185.9 & 0.619 & 228.8\\
scenario1 & logistic\_decline\_LSQ & 3 & 73.81 & 52.96 & 0.429 & 63.23 & 0.524 & 122.0 & 0.524 & 162.0 & 0.524 & 198.2\\
scenario1 & naive\_last & 3 & 63.86 & 44.14 & 0.143 & 43.23 & 0.286 & 89.38 & 0.429 & 120.7 & 0.571 & 146.1\\
scenario1 & recent\_exponential & 3 & 67.45 & 49.30 & 0.381 & 61.74 & 0.714 & 120.9 & 0.810 & 158.2 & 0.810 & 189.3\\
scenario1 & cosine\_LAD & 4 & 261.2 & 212.1 & 0.143 & 95.69 & 0.190 & 197.8 & 0.238 & 265.5 & 0.238 & 324.4\\
scenario1 & cosine\_LSQ & 4 & 156.8 & 119.0 & 0.048 & 80.76 & 0.190 & 155.5 & 0.429 & 205.4 & 0.476 & 249.3\\
scenario1 & exponential\_LAD & 4 & 316.0 & 256.3 & 0.143 & 116.7 & 0.143 & 239.0 & 0.190 & 318.2 & 0.333 & 390.4\\
scenario1 & exponential\_LSQ & 4 & 226.2 & 183.3 & 0.095 & 89.81 & 0.238 & 186.9 & 0.238 & 248.9 & 0.286 & 305.8\\
scenario1 & logistic\_decline\_LAD & 4 & 118.2 & 88.40 & 0.238 & 77.32 & 0.333 & 150.2 & 0.476 & 195.1 & 0.476 & 237.5\\
scenario1 & logistic\_decline\_LSQ & 4 & 104.6 & 78.01 & 0.238 & 67.07 & 0.333 & 132.0 & 0.429 & 172.2 & 0.476 & 208.9\\
scenario1 & naive\_last & 4 & 81.24 & 58.56 & 0.048 & 43.68 & 0.238 & 90.10 & 0.286 & 121.8 & 0.524 & 151.3\\
scenario1 & recent\_exponential & 4 & 99.57 & 75.08 & 0.381 & 68.13 & 0.524 & 134.2 & 0.667 & 174.0 & 0.714 & 209.3\\
scenario1 & cosine\_LAD & 5 & 355.6 & 296.6 & 0.095 & 111.1 & 0.143 & 231.2 & 0.143 & 311.7 & 0.190 & 381.9\\
scenario1 & cosine\_LSQ & 5 & 219.0 & 173.3 & 0.048 & 92.45 & 0.190 & 180.4 & 0.238 & 236.4 & 0.381 & 287.8\\
scenario1 & exponential\_LAD & 5 & 405.4 & 334.1 & 0.048 & 133.0 & 0.095 & 274.8 & 0.143 & 363.1 & 0.286 & 445.6\\
scenario1 & exponential\_LSQ & 5 & 302.2 & 248.2 & 0.095 & 100.8 & 0.143 & 217.9 & 0.190 & 293.7 & 0.238 & 362.3\\
scenario1 & logistic\_decline\_LAD & 5 & 149.4 & 112.7 & 0.095 & 82.47 & 0.286 & 161.0 & 0.333 & 210.4 & 0.476 & 258.0\\
scenario1 & logistic\_decline\_LSQ & 5 & 133.0 & 102.3 & 0.143 & 72.43 & 0.333 & 139.4 & 0.381 & 182.7 & 0.429 & 219.5\\
scenario1 & naive\_last & 5 & 92.95 & 70.15 & 0.048 & 43.09 & 0.143 & 89.71 & 0.286 & 120.7 & 0.381 & 148.4\\
scenario1 & recent\_exponential & 5 & 135.3 & 106.3 & 0.333 & 75.94 & 0.524 & 148.9 & 0.524 & 192.3 & 0.619 & 231.0\\
scenario2 & cosine\_LAD & 1 & 43.19 & 29.36 & 0.429 & 46.06 & 0.619 & 91.61 & 0.714 & 120.8 & 0.810 & 148.4\\
scenario2 & cosine\_LSQ & 1 & 32.21 & 20.29 & 0.524 & 52.34 & 0.762 & 104.0 & 0.857 & 136.0 & 1.000 & 164.9\\
scenario2 & exponential\_LAD & 1 & 54.50 & 35.81 & 0.190 & 43.09 & 0.333 & 88.54 & 0.571 & 118.6 & 0.667 & 148.1\\
scenario2 & exponential\_LSQ & 1 & 32.02 & 21.01 & 0.476 & 42.87 & 0.667 & 87.04 & 0.762 & 118.7 & 0.905 & 147.9\\
scenario2 & logistic\_decline\_LAD & 1 & 31.07 & 19.31 & 0.429 & 41.27 & 0.619 & 83.43 & 0.952 & 110.8 & 1.000 & 136.3\\
scenario2 & logistic\_decline\_LSQ & 1 & 18.79 & 13.85 & 0.619 & 43.27 & 0.762 & 87.70 & 0.905 & 114.9 & 0.952 & 139.2\\
scenario2 & naive\_last & 1 & 24.81 & 15.46 & 0.333 & 33.39 & 0.810 & 70.53 & 0.952 & 96.76 & 0.952 & 121.5\\
scenario2 & recent\_exponential & 1 & 19.12 & 12.51 & 0.667 & 42.25 & 0.905 & 85.97 & 0.905 & 115.4 & 0.952 & 142.4\\
scenario2 & cosine\_LAD & 2 & 67.02 & 46.60 & 0.286 & 53.67 & 0.524 & 108.2 & 0.667 & 142.3 & 0.762 & 175.8\\
scenario2 & cosine\_LSQ & 2 & 48.55 & 29.69 & 0.429 & 59.60 & 0.667 & 117.6 & 0.762 & 152.3 & 0.905 & 183.3\\
scenario2 & exponential\_LAD & 2 & 73.05 & 50.81 & 0.143 & 48.68 & 0.381 & 99.98 & 0.476 & 135.9 & 0.619 & 170.4\\
scenario2 & exponential\_LSQ & 2 & 43.38 & 27.40 & 0.381 & 47.54 & 0.667 & 97.14 & 0.667 & 131.3 & 0.952 & 163.2\\
scenario2 & logistic\_decline\_LAD & 2 & 41.67 & 27.24 & 0.476 & 43.22 & 0.619 & 86.50 & 0.714 & 114.6 & 0.857 & 140.9\\
scenario2 & logistic\_decline\_LSQ & 2 & 25.05 & 17.36 & 0.619 & 45.33 & 0.810 & 91.57 & 0.905 & 121.3 & 0.952 & 148.6\\
scenario2 & naive\_last & 2 & 27.45 & 16.91 & 0.381 & 33.83 & 0.619 & 70.40 & 0.810 & 97.86 & 0.857 & 121.4\\
scenario2 & recent\_exponential & 2 & 25.62 & 16.57 & 0.524 & 44.62 & 0.810 & 92.01 & 0.952 & 123.0 & 0.952 & 152.9\\
scenario2 & cosine\_LAD & 3 & 92.55 & 67.70 & 0.238 & 60.76 & 0.381 & 122.8 & 0.476 & 162.5 & 0.524 & 199.1\\
scenario2 & cosine\_LSQ & 3 & 75.24 & 51.45 & 0.381 & 67.39 & 0.571 & 133.7 & 0.619 & 174.0 & 0.619 & 209.0\\
scenario2 & exponential\_LAD & 3 & 105.2 & 76.75 & 0.143 & 55.44 & 0.238 & 113.3 & 0.286 & 154.2 & 0.381 & 190.8\\
scenario2 & exponential\_LSQ & 3 & 63.93 & 42.48 & 0.238 & 53.13 & 0.476 & 108.6 & 0.619 & 146.2 & 0.619 & 181.8\\
scenario2 & logistic\_decline\_LAD & 3 & 60.14 & 42.88 & 0.238 & 45.63 & 0.476 & 91.65 & 0.571 & 121.6 & 0.667 & 146.7\\
scenario2 & logistic\_decline\_LSQ & 3 & 38.07 & 25.65 & 0.429 & 47.62 & 0.619 & 96.28 & 0.714 & 128.1 & 0.762 & 154.4\\
scenario2 & naive\_last & 3 & 34.38 & 24.20 & 0.381 & 33.44 & 0.571 & 70.97 & 0.619 & 97.50 & 0.762 & 122.3\\
scenario2 & recent\_exponential & 3 & 40.71 & 25.56 & 0.286 & 48.73 & 0.571 & 100.9 & 0.762 & 134.2 & 0.905 & 165.4\\
scenario2 & cosine\_LAD & 4 & 135.5 & 102.5 & 0.190 & 70.09 & 0.286 & 140.1 & 0.286 & 186.2 & 0.333 & 226.0\\
scenario2 & cosine\_LSQ & 4 & 125.7 & 91.11 & 0.190 & 79.05 & 0.429 & 155.8 & 0.524 & 203.9 & 0.524 & 244.6\\
scenario2 & exponential\_LAD & 4 & 140.0 & 106.2 & 0.095 & 63.12 & 0.190 & 129.5 & 0.190 & 177.1 & 0.238 & 220.1\\
scenario2 & exponential\_LSQ & 4 & 86.19 & 59.38 & 0.048 & 58.21 & 0.476 & 118.8 & 0.524 & 161.9 & 0.571 & 200.9\\
scenario2 & logistic\_decline\_LAD & 4 & 78.55 & 57.80 & 0.143 & 48.44 & 0.286 & 95.86 & 0.429 & 127.1 & 0.524 & 152.7\\
scenario2 & logistic\_decline\_LSQ & 4 & 56.21 & 37.76 & 0.238 & 50.51 & 0.476 & 100.9 & 0.571 & 133.6 & 0.714 & 161.3\\
scenario2 & naive\_last & 4 & 46.10 & 32.36 & 0.143 & 33.83 & 0.476 & 70.99 & 0.571 & 97.83 & 0.667 & 123.6\\
scenario2 & recent\_exponential & 4 & 60.31 & 39.83 & 0.286 & 53.55 & 0.429 & 109.0 & 0.667 & 149.1 & 0.762 & 181.3\\
scenario2 & cosine\_LAD & 5 & 201.5 & 158.2 & 0.095 & 82.74 & 0.190 & 168.9 & 0.190 & 223.9 & 0.286 & 273.5\\
scenario2 & cosine\_LSQ & 5 & 173.1 & 132.6 & 0.143 & 92.75 & 0.333 & 181.6 & 0.476 & 237.0 & 0.476 & 286.1\\
scenario2 & exponential\_LAD & 5 & 186.7 & 147.3 & 0.048 & 72.94 & 0.190 & 148.1 & 0.190 & 200.7 & 0.238 & 250.9\\
scenario2 & exponential\_LSQ & 5 & 114.7 & 82.72 & 0.095 & 65.23 & 0.190 & 133.3 & 0.429 & 179.7 & 0.524 & 222.4\\
scenario2 & logistic\_decline\_LAD & 5 & 98.36 & 73.44 & 0.095 & 51.13 & 0.190 & 101.4 & 0.238 & 133.6 & 0.571 & 161.6\\
scenario2 & logistic\_decline\_LSQ & 5 & 75.26 & 53.82 & 0.143 & 53.95 & 0.429 & 107.1 & 0.524 & 139.7 & 0.524 & 169.4\\
scenario2 & naive\_last & 5 & 51.12 & 36.76 & 0.190 & 33.46 & 0.381 & 71.09 & 0.476 & 97.77 & 0.714 & 122.5\\
scenario2 & recent\_exponential & 5 & 82.02 & 57.94 & 0.286 & 57.75 & 0.381 & 119.5 & 0.429 & 161.4 & 0.524 & 198.3\\
scenario3 & cosine\_LAD & 1 & 79.60 & 60.04 & 0.143 & 44.79 & 0.476 & 92.35 & 0.476 & 123.1 & 0.476 & 152.6\\
scenario3 & cosine\_LSQ & 1 & 63.36 & 47.48 & 0.333 & 47.44 & 0.619 & 95.63 & 0.714 & 126.9 & 0.810 & 156.5\\
scenario3 & exponential\_LAD & 1 & 119.4 & 90.48 & 0.143 & 65.06 & 0.333 & 132.5 & 0.429 & 177.9 & 0.524 & 218.1\\
scenario3 & exponential\_LSQ & 1 & 103.6 & 76.67 & 0.143 & 57.77 & 0.333 & 120.3 & 0.381 & 165.2 & 0.381 & 205.4\\
scenario3 & logistic\_decline\_LAD & 1 & 45.12 & 31.06 & 0.429 & 47.66 & 0.619 & 96.25 & 0.619 & 128.5 & 0.762 & 158.0\\
scenario3 & logistic\_decline\_LSQ & 1 & 27.95 & 18.66 & 0.476 & 44.92 & 0.857 & 89.40 & 0.905 & 117.2 & 0.952 & 143.2\\
scenario3 & naive\_last & 1 & 22.71 & 14.98 & 0.619 & 36.65 & 0.714 & 76.83 & 0.857 & 104.9 & 1.000 & 130.2\\
scenario3 & recent\_exponential & 1 & 23.93 & 15.43 & 0.476 & 44.51 & 0.857 & 88.45 & 0.905 & 115.2 & 1.000 & 140.4\\
scenario3 & cosine\_LAD & 2 & 113.4 & 91.62 & 0.238 & 51.21 & 0.381 & 103.9 & 0.381 & 139.7 & 0.381 & 171.8\\
scenario3 & cosine\_LSQ & 2 & 99.26 & 75.46 & 0.095 & 51.66 & 0.476 & 105.1 & 0.476 & 139.7 & 0.571 & 170.1\\
scenario3 & exponential\_LAD & 2 & 159.6 & 125.7 & 0.190 & 71.39 & 0.238 & 147.6 & 0.286 & 196.3 & 0.381 & 243.6\\
scenario3 & exponential\_LSQ & 2 & 145.0 & 111.7 & 0.095 & 63.15 & 0.190 & 128.9 & 0.286 & 176.5 & 0.286 & 224.3\\
scenario3 & logistic\_decline\_LAD & 2 & 62.02 & 42.66 & 0.143 & 49.17 & 0.429 & 99.63 & 0.524 & 131.7 & 0.619 & 161.1\\
scenario3 & logistic\_decline\_LSQ & 2 & 46.31 & 30.89 & 0.333 & 45.54 & 0.619 & 90.89 & 0.762 & 119.3 & 0.810 & 144.3\\
scenario3 & naive\_last & 2 & 41.26 & 27.51 & 0.190 & 36.10 & 0.429 & 76.63 & 0.619 & 104.8 & 0.667 & 132.8\\
scenario3 & recent\_exponential & 2 & 40.74 & 25.91 & 0.381 & 46.81 & 0.524 & 91.70 & 0.667 & 120.2 & 0.857 & 144.4\\
scenario3 & cosine\_LAD & 3 & 158.0 & 131.6 & 0.286 & 58.50 & 0.286 & 119.0 & 0.333 & 158.7 & 0.333 & 195.3\\
scenario3 & cosine\_LSQ & 3 & 125.3 & 100.3 & 0.143 & 57.49 & 0.381 & 115.6 & 0.524 & 154.6 & 0.571 & 190.4\\
scenario3 & exponential\_LAD & 3 & 207.2 & 167.7 & 0.143 & 78.98 & 0.238 & 162.0 & 0.238 & 221.3 & 0.238 & 272.1\\
scenario3 & exponential\_LSQ & 3 & 191.4 & 153.9 & 0.095 & 67.51 & 0.143 & 141.9 & 0.190 & 194.8 & 0.238 & 243.2\\
scenario3 & logistic\_decline\_LAD & 3 & 76.81 & 54.58 & 0.095 & 51.16 & 0.238 & 102.8 & 0.524 & 134.9 & 0.524 & 163.4\\
scenario3 & logistic\_decline\_LSQ & 3 & 68.79 & 48.04 & 0.190 & 48.44 & 0.333 & 94.86 & 0.524 & 124.6 & 0.571 & 149.7\\
scenario3 & naive\_last & 3 & 58.52 & 42.13 & 0.095 & 36.10 & 0.333 & 76.11 & 0.333 & 105.1 & 0.381 & 131.6\\
scenario3 & recent\_exponential & 3 & 63.60 & 43.56 & 0.238 & 50.38 & 0.333 & 97.29 & 0.429 & 127.2 & 0.571 & 155.4\\
scenario3 & cosine\_LAD & 4 & 216.4 & 180.6 & 0.048 & 67.60 & 0.143 & 137.9 & 0.286 & 186.9 & 0.333 & 229.9\\
scenario3 & cosine\_LSQ & 4 & 177.6 & 145.8 & 0.048 & 66.52 & 0.286 & 132.5 & 0.333 & 176.8 & 0.476 & 215.0\\
scenario3 & exponential\_LAD & 4 & 261.6 & 216.0 & 0.095 & 88.75 & 0.190 & 182.5 & 0.190 & 246.0 & 0.190 & 306.4\\
scenario3 & exponential\_LSQ & 4 & 243.2 & 201.9 & 0.095 & 75.85 & 0.143 & 161.0 & 0.143 & 222.2 & 0.190 & 272.0\\
scenario3 & logistic\_decline\_LAD & 4 & 97.93 & 72.57 & 0.048 & 52.35 & 0.190 & 106.1 & 0.286 & 140.2 & 0.286 & 170.4\\
scenario3 & logistic\_decline\_LSQ & 4 & 90.69 & 67.25 & 0.095 & 50.71 & 0.286 & 100.2 & 0.429 & 131.6 & 0.524 & 159.4\\
scenario3 & naive\_last & 4 & 76.12 & 57.05 & 0.000 & 36.42 & 0.238 & 77.31 & 0.286 & 104.7 & 0.286 & 130.3\\
scenario3 & recent\_exponential & 4 & 86.67 & 65.09 & 0.190 & 54.73 & 0.238 & 106.2 & 0.286 & 140.2 & 0.381 & 167.6\\
scenario3 & cosine\_LAD & 5 & 273.7 & 234.3 & 0.143 & 77.38 & 0.190 & 160.7 & 0.238 & 217.6 & 0.333 & 266.8\\
scenario3 & cosine\_LSQ & 5 & 224.2 & 188.6 & 0.095 & 75.33 & 0.238 & 151.4 & 0.286 & 200.0 & 0.476 & 245.9\\
scenario3 & exponential\_LAD & 5 & 327.3 & 272.2 & 0.000 & 101.6 & 0.048 & 207.4 & 0.143 & 277.6 & 0.190 & 345.5\\
scenario3 & exponential\_LSQ & 5 & 306.6 & 258.5 & 0.048 & 82.00 & 0.095 & 178.4 & 0.143 & 245.3 & 0.190 & 304.3\\
scenario3 & logistic\_decline\_LAD & 5 & 117.7 & 91.29 & 0.048 & 53.96 & 0.143 & 108.8 & 0.238 & 145.9 & 0.238 & 177.5\\
scenario3 & logistic\_decline\_LSQ & 5 & 112.3 & 87.01 & 0.143 & 53.80 & 0.238 & 106.4 & 0.333 & 138.5 & 0.429 & 166.0\\
scenario3 & naive\_last & 5 & 91.60 & 70.83 & 0.000 & 35.62 & 0.095 & 76.23 & 0.095 & 104.3 & 0.190 & 131.5\\
scenario3 & recent\_exponential & 5 & 118.9 & 92.95 & 0.095 & 60.04 & 0.238 & 117.1 & 0.286 & 153.1 & 0.333 & 184.1\\
scenario4 & cosine\_LAD & 1 & 93.00 & 68.52 & 0.286 & 57.79 & 0.476 & 125.9 & 0.667 & 174.7 & 0.762 & 217.3\\
scenario4 & cosine\_LSQ & 1 & 65.45 & 42.42 & 0.333 & 70.86 & 0.762 & 140.4 & 0.857 & 184.0 & 0.857 & 222.4\\
scenario4 & exponential\_LAD & 1 & 121.9 & 91.80 & 0.286 & 61.96 & 0.476 & 132.7 & 0.524 & 183.7 & 0.667 & 233.8\\
scenario4 & exponential\_LSQ & 1 & 63.83 & 44.62 & 0.476 & 57.07 & 0.714 & 120.7 & 0.857 & 164.6 & 0.857 & 203.9\\
scenario4 & logistic\_decline\_LAD & 1 & 45.69 & 28.69 & 0.476 & 62.89 & 0.857 & 132.3 & 0.952 & 178.5 & 0.952 & 217.0\\
scenario4 & logistic\_decline\_LSQ & 1 & 28.43 & 18.29 & 0.619 & 62.90 & 0.905 & 125.7 & 1.000 & 167.0 & 1.000 & 203.4\\
scenario4 & naive\_last & 1 & 24.90 & 16.24 & 0.476 & 48.40 & 0.857 & 104.4 & 0.905 & 141.9 & 0.952 & 176.5\\
scenario4 & recent\_exponential & 1 & 27.60 & 18.43 & 0.619 & 60.86 & 0.905 & 124.9 & 1.000 & 165.8 & 1.000 & 202.7\\
scenario4 & cosine\_LAD & 2 & 167.6 & 132.4 & 0.286 & 70.85 & 0.476 & 154.6 & 0.476 & 213.0 & 0.619 & 267.4\\
scenario4 & cosine\_LSQ & 2 & 110.2 & 76.25 & 0.333 & 84.58 & 0.429 & 167.3 & 0.667 & 220.3 & 0.762 & 264.6\\
scenario4 & exponential\_LAD & 2 & 175.1 & 136.8 & 0.286 & 73.74 & 0.381 & 159.9 & 0.429 & 221.0 & 0.524 & 279.2\\
scenario4 & exponential\_LSQ & 2 & 92.17 & 68.17 & 0.286 & 66.01 & 0.571 & 139.4 & 0.762 & 190.5 & 0.810 & 236.7\\
scenario4 & logistic\_decline\_LAD & 2 & 73.81 & 46.67 & 0.286 & 72.70 & 0.571 & 150.1 & 0.810 & 199.0 & 0.952 & 239.2\\
scenario4 & logistic\_decline\_LSQ & 2 & 52.05 & 31.27 & 0.381 & 67.53 & 0.714 & 137.9 & 0.905 & 182.3 & 1.000 & 222.1\\
scenario4 & naive\_last & 2 & 44.95 & 29.30 & 0.238 & 48.67 & 0.524 & 104.2 & 0.714 & 142.6 & 0.762 & 176.4\\
scenario4 & recent\_exponential & 2 & 53.12 & 32.83 & 0.333 & 67.21 & 0.762 & 137.0 & 0.857 & 183.0 & 0.952 & 224.5\\
scenario4 & cosine\_LAD & 3 & 255.0 & 208.3 & 0.238 & 81.06 & 0.381 & 183.5 & 0.381 & 257.7 & 0.381 & 323.3\\
scenario4 & cosine\_LSQ & 3 & 173.8 & 128.9 & 0.286 & 102.5 & 0.429 & 201.1 & 0.429 & 263.6 & 0.571 & 319.4\\
scenario4 & exponential\_LAD & 3 & 247.6 & 198.1 & 0.190 & 86.45 & 0.333 & 188.4 & 0.333 & 265.4 & 0.381 & 337.4\\
scenario4 & exponential\_LSQ & 3 & 131.9 & 98.68 & 0.286 & 77.15 & 0.524 & 162.5 & 0.571 & 222.8 & 0.762 & 274.2\\
scenario4 & logistic\_decline\_LAD & 3 & 106.8 & 72.17 & 0.238 & 78.00 & 0.524 & 167.5 & 0.571 & 221.9 & 0.714 & 268.9\\
scenario4 & logistic\_decline\_LSQ & 3 & 77.55 & 49.81 & 0.286 & 74.46 & 0.476 & 148.9 & 0.619 & 198.3 & 0.714 & 239.2\\
scenario4 & naive\_last & 3 & 64.14 & 45.04 & 0.190 & 49.22 & 0.429 & 104.9 & 0.571 & 140.7 & 0.619 & 173.5\\
scenario4 & recent\_exponential & 3 & 80.24 & 51.37 & 0.190 & 74.41 & 0.619 & 152.6 & 0.714 & 204.0 & 0.810 & 249.7\\
scenario4 & cosine\_LAD & 4 & 335.7 & 281.2 & 0.190 & 91.49 & 0.286 & 212.4 & 0.286 & 297.0 & 0.286 & 373.2\\
scenario4 & cosine\_LSQ & 4 & 281.0 & 219.8 & 0.095 & 125.3 & 0.333 & 249.6 & 0.429 & 326.4 & 0.476 & 394.1\\
scenario4 & exponential\_LAD & 4 & 339.0 & 278.2 & 0.143 & 102.4 & 0.190 & 222.2 & 0.286 & 310.3 & 0.286 & 395.7\\
scenario4 & exponential\_LSQ & 4 & 184.6 & 144.1 & 0.238 & 89.58 & 0.333 & 188.5 & 0.476 & 256.0 & 0.571 & 321.2\\
scenario4 & logistic\_decline\_LAD & 4 & 152.0 & 109.9 & 0.143 & 88.79 & 0.381 & 187.0 & 0.524 & 248.9 & 0.619 & 298.3\\
scenario4 & logistic\_decline\_LSQ & 4 & 111.9 & 76.29 & 0.190 & 82.70 & 0.429 & 167.2 & 0.476 & 220.5 & 0.619 & 265.6\\
scenario4 & naive\_last & 4 & 82.81 & 61.70 & 0.190 & 48.40 & 0.286 & 104.1 & 0.381 & 141.8 & 0.619 & 177.1\\
scenario4 & recent\_exponential & 4 & 117.6 & 78.93 & 0.095 & 84.79 & 0.429 & 174.5 & 0.524 & 232.2 & 0.667 & 283.6\\
scenario4 & cosine\_LAD & 5 & 451.5 & 385.0 & 0.095 & 110.8 & 0.286 & 252.5 & 0.286 & 353.0 & 0.286 & 436.0\\
scenario4 & cosine\_LSQ & 5 & 434.5 & 356.6 & 0.190 & 162.3 & 0.286 & 314.4 & 0.429 & 413.5 & 0.429 & 500.5\\
scenario4 & exponential\_LAD & 5 & 473.2 & 398.1 & 0.095 & 123.3 & 0.190 & 269.3 & 0.190 & 377.3 & 0.238 & 480.2\\
scenario4 & exponential\_LSQ & 5 & 279.4 & 222.2 & 0.095 & 103.3 & 0.190 & 217.0 & 0.238 & 298.6 & 0.381 & 371.0\\
scenario4 & logistic\_decline\_LAD & 5 & 212.1 & 163.6 & 0.095 & 98.20 & 0.333 & 206.3 & 0.333 & 276.2 & 0.524 & 334.4\\
scenario4 & logistic\_decline\_LSQ & 5 & 155.5 & 113.9 & 0.190 & 91.72 & 0.286 & 181.2 & 0.429 & 236.5 & 0.524 & 288.7\\
scenario4 & naive\_last & 5 & 112.9 & 91.23 & 0.190 & 49.23 & 0.286 & 104.5 & 0.333 & 141.9 & 0.429 & 174.8\\
scenario4 & recent\_exponential & 5 & 164.5 & 120.6 & 0.143 & 94.99 & 0.286 & 194.1 & 0.429 & 260.6 & 0.619 & 316.8\\
\end{longtable}
\end{landscape}
\begin{landscape}
\scriptsize
\begin{longtable}{@{}lllrrrrrrrrrrr@{}}
\caption{Prespecified fixed-origin results: weeks 1--30 calibration and weeks 31--35 scoring.}\label{tab:s-fixed-full}\\
\toprule
Scenario & Model & Type & MSE & MAE & WIS & C50 & W50 & C80 & W80 & C90 & W90 & C95 & W95\\
\midrule
\endfirsthead
\multicolumn{14}{c}{\tablename\ \thetable\ (continued)}\\
\toprule
Scenario & Model & Type & MSE & MAE & WIS & C50 & W50 & C80 & W80 & C90 & W90 & C95 & W95\\
\midrule
\endhead
\bottomrule
\endfoot
scenario1 & naive\_last & baseline & 1,449.4 & 33.80 & 20.91 & 0.200 & 38.52 & 0.600 & 81.11 & 0.800 & 109.6 & 1.000 & 135.5\\
scenario1 & recent\_exponential & baseline & 575.8 & 20.60 & 12.75 & 0.400 & 26.20 & 0.800 & 52.60 & 0.800 & 70.59 & 0.800 & 87.92\\
scenario1 & cosine\_LAD & seir & 35,951 & 183.4 & 136.6 & 0.000 & 76.98 & 0.000 & 163.8 & 0.000 & 229.0 & 0.200 & 292.4\\
scenario1 & cosine\_LSQ & seir & 489.6 & 17.60 & 11.23 & 0.600 & 31.20 & 0.800 & 62.53 & 1.000 & 84.19 & 1.000 & 108.9\\
scenario1 & exponential\_LAD & seir & 69,917 & 257.6 & 211.8 & 0.000 & 75.43 & 0.000 & 161.9 & 0.000 & 222.2 & 0.000 & 280.4\\
scenario1 & exponential\_LSQ & seir & 48,364 & 215.3 & 171.7 & 0.000 & 72.23 & 0.000 & 152.7 & 0.000 & 207.0 & 0.000 & 257.0\\
scenario1 & logistic\_decline\_LAD & seir & 11,690 & 107.2 & 73.69 & 0.000 & 57.23 & 0.000 & 121.5 & 0.000 & 163.0 & 0.200 & 201.6\\
scenario1 & logistic\_decline\_LSQ & seir & 9,609.2 & 97.10 & 66.22 & 0.000 & 53.92 & 0.000 & 112.4 & 0.000 & 156.3 & 0.200 & 201.2\\
scenario2 & naive\_last & baseline & 2,563.2 & 39.10 & 26.74 & 0.400 & 36.52 & 0.600 & 78.05 & 0.600 & 112.1 & 1.000 & 140.4\\
scenario2 & recent\_exponential & baseline & 1,925.8 & 35.40 & 21.85 & 0.600 & 51.60 & 0.600 & 104.6 & 1.000 & 138.4 & 1.000 & 170.6\\
scenario2 & cosine\_LAD & seir & 3,393.2 & 41.60 & 28.74 & 0.600 & 51.87 & 0.600 & 110.2 & 0.600 & 146.1 & 0.800 & 180.5\\
scenario2 & cosine\_LSQ & seir & 676.4 & 21.60 & 14.60 & 0.800 & 60.92 & 1.000 & 124.2 & 1.000 & 165.8 & 1.000 & 199.8\\
scenario2 & exponential\_LAD & seir & 14,495 & 113.0 & 83.24 & 0.000 & 48.87 & 0.000 & 100.5 & 0.000 & 141.2 & 0.400 & 182.7\\
scenario2 & exponential\_LSQ & seir & 535.6 & 21.10 & 12.73 & 0.400 & 44.75 & 1.000 & 97.58 & 1.000 & 132.9 & 1.000 & 168.2\\
scenario2 & logistic\_decline\_LAD & seir & 11,141 & 96.50 & 75.47 & 0.000 & 36.23 & 0.000 & 72.85 & 0.400 & 96.32 & 0.400 & 118.3\\
scenario2 & logistic\_decline\_LSQ & seir & 6,182.2 & 68.10 & 48.30 & 0.000 & 37.55 & 0.400 & 81.98 & 0.400 & 112.5 & 0.600 & 140.8\\
scenario3 & naive\_last & baseline & 5,571.8 & 71.00 & 53.17 & 0.000 & 32.60 & 0.200 & 67.83 & 0.200 & 92.06 & 0.200 & 114.9\\
scenario3 & recent\_exponential & baseline & 498.4 & 19.90 & 13.18 & 0.200 & 19.52 & 0.200 & 36.40 & 0.400 & 47.66 & 0.600 & 58.40\\
scenario3 & cosine\_LAD & seir & 10,129 & 83.30 & 66.14 & 0.200 & 38.00 & 0.200 & 73.25 & 0.200 & 97.33 & 0.200 & 122.5\\
scenario3 & cosine\_LSQ & seir & 6,483.8 & 75.80 & 62.82 & 0.000 & 23.72 & 0.000 & 47.93 & 0.000 & 64.06 & 0.200 & 78.73\\
scenario3 & exponential\_LAD & seir & 64,377 & 250.4 & 212.3 & 0.000 & 69.23 & 0.000 & 139.5 & 0.000 & 185.1 & 0.000 & 225.2\\
scenario3 & exponential\_LSQ & seir & 31,002 & 173.8 & 135.9 & 0.000 & 66.67 & 0.000 & 139.0 & 0.000 & 188.3 & 0.000 & 230.2\\
scenario3 & logistic\_decline\_LAD & seir & 1,818.5 & 39.50 & 25.51 & 0.000 & 18.60 & 0.000 & 38.53 & 0.600 & 53.78 & 0.800 & 75.25\\
scenario3 & logistic\_decline\_LSQ & seir & 603.2 & 22.00 & 13.60 & 0.400 & 24.20 & 0.400 & 47.73 & 0.600 & 63.86 & 1.000 & 77.73\\
scenario4 & naive\_last & baseline & 1,583.8 & 37.80 & 21.85 & 0.200 & 71.12 & 1.000 & 163.4 & 1.000 & 224.9 & 1.000 & 276.9\\
scenario4 & recent\_exponential & baseline & 16,955 & 125.0 & 73.85 & 0.000 & 129.5 & 0.400 & 238.7 & 0.600 & 301.6 & 1.000 & 359.0\\
scenario4 & cosine\_LAD & seir & 164,801 & 375.4 & 301.6 & 0.000 & 94.12 & 0.000 & 258.3 & 0.000 & 370.6 & 0.000 & 483.1\\
scenario4 & cosine\_LSQ & seir & 8,864.4 & 89.20 & 50.36 & 0.200 & 118.5 & 0.600 & 226.9 & 1.000 & 293.4 & 1.000 & 352.4\\
scenario4 & exponential\_LAD & seir & 414,692 & 595.3 & 525.3 & 0.000 & 83.23 & 0.000 & 194.4 & 0.000 & 302.1 & 0.000 & 423.8\\
scenario4 & exponential\_LSQ & seir & 104,905 & 306.0 & 240.0 & 0.000 & 79.43 & 0.000 & 202.7 & 0.000 & 323.0 & 0.000 & 411.2\\
scenario4 & logistic\_decline\_LAD & seir & 18,957 & 134.8 & 81.90 & 0.000 & 115.3 & 0.200 & 235.8 & 0.400 & 306.5 & 1.000 & 365.8\\
scenario4 & logistic\_decline\_LSQ & seir & 1,438.2 & 37.10 & 22.17 & 1.000 & 94.32 & 1.000 & 186.1 & 1.000 & 239.2 & 1.000 & 288.6\\
\end{longtable}
\end{landscape}

\clearpage
\subsection{Coverage and randomized PIT diagnostics}
The following scenario-specific pages show how interval calibration changes with forecast horizon and whether randomized PIT values reveal systematic under- or overprediction.
\foreach \s in {1,2,3,4}{%
\begin{figure}[p]
\centering
\includegraphics[width=0.94\textwidth]{Supplementary_Figures/\coveragefig{\s}}
\caption{Scenario \s: empirical interval coverage by rolling forecast horizon for the primary negative-binomial predictive distributions. Horizontal reference lines denote nominal levels.}
\label{fig:s-coverage-\s}
\end{figure}
\begin{figure}[p]
\centering
\includegraphics[width=0.94\textwidth]{Supplementary_Figures/\pitfig{\s}}
\caption{Scenario \s: randomized probability integral transform (PIT) diagnostics for rolling-origin forecasts. Uniformity is the calibration reference.}
\label{fig:s-pit-\s}
\end{figure}
}
\FloatBarrier

\clearpage
\subsection{Uncertainty-method sensitivity and bootstrap parameter intervals}
The next tables give all fixed-origin scores for the four uncertainty procedures and all percentile parameter intervals from successful refits.
\begin{landscape}
\scriptsize

\end{landscape}

\clearpage
\subsection{Identifiability diagnostics and fixed-rate sensitivity}
Complete sensitivity ranks and condition numbers are followed by one profile-objective page per scenario and the analysis fixing progression and removal rates.
\begin{table}[p]
\centering
\caption{Complete sampled-output identifiability diagnostics.}
\label{tab:s-ident-full}
\scriptsize
%
\end{table}
\FloatBarrier

\foreach \s in {1,2,3,4}{%
\clearpage
\begin{figure}[p]
\centering
\begin{subfigure}{0.32\textwidth}
\includegraphics[width=\linewidth]{Supplementary_Figures/\profilefig{\s}{a}}
\caption{Cosine LAD}
\end{subfigure}\hfill
\begin{subfigure}{0.32\textwidth}
\includegraphics[width=\linewidth]{Supplementary_Figures/\profilefig{\s}{b}}
\caption{Cosine LSQ}
\end{subfigure}\hfill
\begin{subfigure}{0.32\textwidth}
\includegraphics[width=\linewidth]{Supplementary_Figures/\profilefig{\s}{c}}
\caption{Exponential LAD}
\end{subfigure}

\medskip
\begin{subfigure}{0.32\textwidth}
\includegraphics[width=\linewidth]{Supplementary_Figures/\profilefig{\s}{d}}
\caption{Exponential LSQ}
\end{subfigure}\hfill
\begin{subfigure}{0.32\textwidth}
\includegraphics[width=\linewidth]{Supplementary_Figures/\profilefig{\s}{e}}
\caption{Logistic decline LAD}
\end{subfigure}\hfill
\begin{subfigure}{0.32\textwidth}
\includegraphics[width=\linewidth]{Supplementary_Figures/\profilefig{\s}{f}}
\caption{Logistic decline LSQ}
\end{subfigure}
\caption{Scenario \s: selected normalized profile objectives. Flat, asymmetric, or boundary-directed profiles indicate weak practical identification.}
\label{fig:s-profile-\s}
\end{figure}
\clearpage
}

\begin{landscape}
\scriptsize
\begin{longtable}{@{}llllrrrrrrrr@{}}
\caption{Sensitivity analysis with estimated versus fixed Ebola-informed progression and removal rates.}\label{tab:s-fixed-rate}\\
\toprule
Scenario & Driver & Loss & Rates & MAE & WIS & $\beta_0$ & $\sigma$ & $\gamma$ & $E_0$ & $I_0$ & Conv.\\
\midrule
\endfirsthead
\multicolumn{12}{c}{\tablename\ \thetable\ (continued)}\\
\toprule
Scenario & Driver & Loss & Rates & MAE & WIS & $\beta_0$ & $\sigma$ & $\gamma$ & $E_0$ & $I_0$ & Conv.\\
\midrule
\endhead
\bottomrule
\endfoot
scenario1 & cosine & LAD & estimated & 453.0 & 387.9 & 0.175 & 0.143 & 0.132 & 18.51 & 7.12e-04 & 0\\
scenario1 & cosine & LAD & fixed\_Ebola\_informed & 151.2 & 123.0 & 0.154 & 0.088 & 0.143 & 71.29 & 260.5 & 0\\
scenario1 & cosine & LSQ & estimated & 16.80 & 10.95 & 0.673 & 0.048 & 0.500 & 31.81 & 24.11 & 0\\
scenario1 & cosine & LSQ & fixed\_Ebola\_informed & 273.3 & 243.0 & 0.190 & 0.088 & 0.143 & 50.08 & 0.075 & 0\\
scenario1 & exponential & LAD & estimated & 241.4 & 191.1 & 0.588 & 0.086 & 0.499 & 106.6 & 0.064 & 0\\
scenario1 & exponential & LAD & fixed\_Ebola\_informed & 235.9 & 187.4 & 0.176 & 0.088 & 0.143 & 21.40 & 10.70 & 0\\
scenario1 & exponential & LSQ & estimated & 219.5 & 177.2 & 0.591 & 0.048 & 0.500 & 12.19 & 6.08e-04 & 0\\
scenario1 & exponential & LSQ & fixed\_Ebola\_informed & 183.1 & 144.1 & 0.159 & 0.088 & 0.143 & 8.76 & 0.012 & 0\\
scenario1 & logistic\_decline & LAD & estimated & 73.80 & 47.94 & 0.347 & 0.286 & 0.247 & 0.482 & 0.273 & 0\\
scenario1 & logistic\_decline & LAD & fixed\_Ebola\_informed & 126.1 & 92.25 & 0.478 & 0.088 & 0.143 & 16.71 & 0.030 & 0\\
scenario1 & logistic\_decline & LSQ & estimated & 15.10 & 9.95 & 2.00 & 0.049 & 0.500 & 2.54 & 1.00e-06 & 0\\
scenario1 & logistic\_decline & LSQ & fixed\_Ebola\_informed & 25.90 & 15.60 & 0.196 & 0.088 & 0.143 & 1.00e-06 & 34.94 & 0\\
scenario2 & cosine & LAD & estimated & 103.8 & 69.74 & 0.567 & 0.139 & 0.466 & 1.43 & 4.24 & 0\\
scenario2 & cosine & LAD & fixed\_Ebola\_informed & 25.10 & 15.36 & 0.208 & 0.088 & 0.143 & 6.02 & 3.00 & 0\\
scenario2 & cosine & LSQ & estimated & 20.90 & 14.00 & 0.772 & 0.048 & 0.460 & 4.97 & 1.00e-06 & 0\\
scenario2 & cosine & LSQ & fixed\_Ebola\_informed & 77.80 & 50.94 & 0.177 & 0.088 & 0.143 & 28.13 & 27.18 & 0\\
scenario2 & exponential & LAD & estimated & 111.8 & 82.49 & 0.690 & 0.048 & 0.500 & 91.98 & 0.017 & 0\\
scenario2 & exponential & LAD & fixed\_Ebola\_informed & 74.00 & 46.93 & 0.179 & 0.088 & 0.143 & 12.28 & 74.98 & 0\\
scenario2 & exponential & LSQ & estimated & 350.0 & 307.0 & 0.846 & 0.048 & 0.499 & 0.726 & 0.983 & 0\\
scenario2 & exponential & LSQ & fixed\_Ebola\_informed & 27.40 & 16.52 & 0.165 & 0.088 & 0.143 & 13.17 & 6.59 & 0\\
scenario2 & logistic\_decline & LAD & estimated & 19.00 & 11.70 & 0.607 & 0.100 & 0.258 & 1.13 & 0.014 & 0\\
scenario2 & logistic\_decline & LAD & fixed\_Ebola\_informed & 71.40 & 50.49 & 0.264 & 0.088 & 0.143 & 1.53 & 0.803 & 0\\
scenario2 & logistic\_decline & LSQ & estimated & 56.00 & 36.15 & 0.889 & 0.048 & 0.339 & 4.03 & 0.014 & 0\\
scenario2 & logistic\_decline & LSQ & fixed\_Ebola\_informed & 67.80 & 45.73 & 0.255 & 0.088 & 0.143 & 0.847 & 4.03 & 0\\
scenario3 & cosine & LAD & estimated & 205.8 & 180.2 & 0.152 & 0.048 & 0.048 & 0.878 & 1.40 & 0\\
scenario3 & cosine & LAD & fixed\_Ebola\_informed & 118.0 & 101.0 & 0.360 & 0.088 & 0.143 & 0.012 & 0.002 & 0\\
scenario3 & cosine & LSQ & estimated & 333.0 & 307.0 & 0.345 & 0.141 & 0.300 & 17.96 & 14.53 & 0\\
scenario3 & cosine & LSQ & fixed\_Ebola\_informed & 25.00 & 14.11 & 0.138 & 0.088 & 0.143 & 74.90 & 633.8 & 0\\
scenario3 & exponential & LAD & estimated & 467.1 & 408.3 & 0.742 & 0.048 & 0.496 & 0.247 & 192.7 & 0\\
scenario3 & exponential & LAD & fixed\_Ebola\_informed & 435.9 & 380.8 & 0.185 & 0.088 & 0.143 & 9.66 & 4.83 & 0\\
scenario3 & exponential & LSQ & estimated & 185.7 & 151.6 & 0.174 & 0.048 & 0.170 & 65.56 & 6.55 & 0\\
scenario3 & exponential & LSQ & fixed\_Ebola\_informed & 159.0 & 130.0 & 0.141 & 0.088 & 0.143 & 13.51 & 0.556 & 0\\
scenario3 & logistic\_decline & LAD & estimated & 141.1 & 113.8 & 1.54 & 0.054 & 0.498 & 41.68 & 0.008 & 0\\
scenario3 & logistic\_decline & LAD & fixed\_Ebola\_informed & 55.60 & 41.36 & 0.214 & 0.088 & 0.143 & 22.44 & 0.002 & 0\\
scenario3 & logistic\_decline & LSQ & estimated & 9.40 & 5.71 & 0.616 & 0.235 & 0.500 & 2.78 & 5.45e-05 & 0\\
scenario3 & logistic\_decline & LSQ & fixed\_Ebola\_informed & 18.00 & 11.59 & 0.183 & 0.088 & 0.143 & 74.09 & 1.00e-06 & 0\\
scenario4 & cosine & LAD & estimated & 132.6 & 108.3 & 0.067 & 0.076 & 0.050 & 112.2 & 0.048 & 0\\
scenario4 & cosine & LAD & fixed\_Ebola\_informed & 959.4 & 800.9 & 0.280 & 0.088 & 0.143 & 0.405 & 0.068 & 0\\
scenario4 & cosine & LSQ & estimated & 85.50 & 61.69 & 0.601 & 0.048 & 0.500 & 57.37 & 62.22 & 0\\
scenario4 & cosine & LSQ & fixed\_Ebola\_informed & 102.5 & 59.30 & 0.203 & 0.088 & 0.143 & 15.23 & 0.004 & 0\\
scenario4 & exponential & LAD & estimated & 340.6 & 269.1 & 0.129 & 0.328 & 0.104 & 0.342 & 0.020 & 0\\
scenario4 & exponential & LAD & fixed\_Ebola\_informed & 139.2 & 99.31 & 0.175 & 0.088 & 0.143 & 0.587 & 0.025 & 0\\
scenario4 & exponential & LSQ & estimated & 275.8 & 210.9 & 0.684 & 0.048 & 0.500 & 13.59 & 0.141 & 0\\
scenario4 & exponential & LSQ & fixed\_Ebola\_informed & 286.8 & 222.8 & 0.189 & 0.088 & 0.143 & 3.95 & 2.81 & 0\\
scenario4 & logistic\_decline & LAD & estimated & 124.7 & 75.31 & 0.249 & 0.156 & 0.076 & 0.701 & 1.29e-06 & 0\\
scenario4 & logistic\_decline & LAD & fixed\_Ebola\_informed & 23.90 & 19.53 & 0.629 & 0.088 & 0.143 & 0.096 & 0.084 & 0\\
scenario4 & logistic\_decline & LSQ & estimated & 89.70 & 51.98 & 0.879 & 0.048 & 0.500 & 12.39 & 1.29e-04 & 0\\
scenario4 & logistic\_decline & LSQ & fixed\_Ebola\_informed & 114.8 & 68.69 & 0.547 & 0.088 & 0.143 & 3.89 & 0.001 & 0\\
\end{longtable}
\end{landscape}

\section{Reproducibility specification}
\label{app:reproducibility}
The complete R analysis package is publicly available. It contains the fixed settings, four released scenario files, simulation and fitting functions, execution scripts, task-level results, aggregation code, validation records, and detailed reproducibility instructions. The full analysis used R 4.5.0 and \texttt{deSolve} 1.42 on Linux. Randomness is governed by the deterministic seed scheme below, so every analysis task can be reproduced independently.

\subsection{Parameter bounds}
\begin{longtable}{@{}llll@{}}
\caption{Parameter bounds.}\label{tab:s-bounds}\\
\toprule
Parameter & Lower & Upper & Units / interpretation\\
\midrule
\endfirsthead
\multicolumn{4}{c}{\tablename\ \thetable\ (continued)}\\
\toprule
Parameter & Lower & Upper & Units / interpretation\\
\midrule
\endhead
\bottomrule
\endfoot
beta0 & 0.0001 & 3.0 & per day\\
sigma & 0.0476190476190476 & 0.333333333333333 & per day\\
gamma & 0.0476190476190476 & 0.5 & per day\\
cosine a & -0.95 & 0.95 & dimensionless\\
cosine period & 21.0 & 365.0 & days\\
exponential a & -0.95 & 2.0 & dimensionless\\
exponential b & -0.05 & 0.05 & per day\\
logistic q & 0.05 & 0.95 & long-run fraction\\
logistic k & 0.005 & 0.5 & per day\\
logistic tau & 7.0 & 365.0 & days\\
\end{longtable}

\subsection{Starting values and numerical rules}
\begin{longtable}{@{}p{0.23\textwidth}p{0.70\textwidth}@{}}
\caption{Starting-value generation rules.}\label{tab:s-starts}\\
\toprule
Rule & Description\\
\midrule
\endfirsthead
\multicolumn{2}{c}{\tablename\ \thetable\ (continued)}\\
\toprule
Rule & Description\\
\midrule
\endhead
\bottomrule
\endfoot
Shared starts & LAD and LSQ receive the same feasible starts within every paired simulation comparison.\\
First start & Biologically motivated values clipped to the configured bounds.\\
Additional starts & Independent draws over the parameter bounds; positive parameters are sampled on a log scale.\\
Bootstrap starts & Warm start at the original estimate plus additional feasible draws.\\
Fixed-rate sensitivity & sigma and gamma fixed at Ebola-informed values; all remaining parameters re-estimated.\\
\end{longtable}

\subsection{Seed construction}
\begin{longtable}{@{}p{0.23\textwidth}p{0.70\textwidth}@{}}
\caption{Random-seed scheme.}\label{tab:s-seeds}\\
\toprule
Component & Seed rule\\
\midrule
\endfirsthead
\multicolumn{2}{c}{\tablename\ \thetable\ (continued)}\\
\toprule
Component & Seed rule\\
\midrule
\endhead
\bottomrule
\endfoot
Base seed & 20260627\\
Simulation replication & Deterministic function of base seed, condition ID, and replication ID\\
Rolling-origin task & Deterministic function of base seed, task ID, and forecast origin\\
Fixed-origin bootstrap & Deterministic function of base seed, task ID, bootstrap method, and replication ID\\
Identifiability profiles & Deterministic function of base seed, task ID, profile parameter, and grid index\\
\end{longtable}

\subsection{Negative-binomial dispersion implementation}
The shared helper \texttt{estimate\_nb\_size} first floors model means at $10^{-6}$ and computes
\[
\widehat v=\frac{1}{m}\sum_{i=1}^{m}(y_i^*-\widetilde\mu_i)^2,\qquad
\overline\mu=\frac{1}{m}\sum_{i=1}^{m}\widetilde\mu_i,\qquad
\overline{\mu^2}=\frac{1}{m}\sum_{i=1}^{m}\widetilde\mu_i^2,
\qquad
\widehat\kappa=\frac{\overline{\mu^2}}{\widehat v-\overline\mu}.
\]
No degrees-of-freedom correction is used. If the denominator is nonfinite or at most $10^{-8}$, the fallback size is 30; a finite estimate is truncated to $[0.1,10^6]$. Observed zero counts are retained, and the negative-binomial generator applies a separate mean floor of $10^{-8}$.

For rolling-origin prediction, one size value is \emph{not} held fixed for the whole model--origin fit. In every predictive replication, calibration residuals from the current fit are sampled with replacement to the forecast length; pseudo-counts are formed by adding those sampled residuals to the forecast means, and the formula above is applied to that draw. SEIR models and both baselines use this same routine, but with model-specific residuals: all calibration residuals for SEIR, one-step last-observation residuals for the naive baseline, and residuals from the six-point recent fitting window for the recent-exponential baseline.

For the fixed-origin negative-binomial bootstrap, pseudo-calibration data are first generated using a fit-specific moment estimate based on median-centered calibration residuals. Each bootstrap data set is refitted. The forecast observation layer then uses the same draw-specific residual-resampling calculation described above. Consequently, the reported intervals include refitting variation, residual resampling, and variation in the moment-based size estimate. The implementation is in \texttt{R/utilities.R} (function \texttt{estimate\_nb\_size}), \texttt{R/bootstrap\_methods.R}, and \texttt{R/baseline\_models.R}.

\subsection{Code and data access}
The released RAPIDD scenario data, R package, analysis settings, deterministic random-seed scheme, processed results, tables, figures, automated validation, and detailed reproducibility instructions are publicly available at \url{https://github.com/YisaAdeniyiAbolade/lad-seir-calibration}. 

\subsection{Software}
\begin{longtable}{@{}ll@{}}
\caption{Software versions.}\label{tab:s-software}\\
\toprule
Component & Version\\
\midrule
\endfirsthead
\multicolumn{2}{c}{\tablename\ \thetable\ (continued)}\\
\toprule
Component & Version\\
\midrule
\endhead
\bottomrule
\endfoot
R & 4.5.0\\
deSolve & 1.42\\
platform & x86\_64-pc-linux-gnu\\
\end{longtable}

The ODE solver uses relative and absolute tolerances of $10^{-9}$. Primary LAD/LSQ fits use 12 shared feasible starts, L-BFGS-B, and at most 3,000 iterations per start. Bootstrap refits use a warm start at the original estimate and two additional starts. A fit is retained only when the ODE solution, objective value, state trajectory, and transmission curve are finite and satisfy the configured constraints. Task logs and result-file checks are retained with the computational output.


\newpage
\begin{thebibliography}{99}
\bibitem{ajelli2018rapiddmodel} Ajelli M, Zhang Q, Sun K, Merler S, Fumanelli L, Chowell G, et al. The RAPIDD Ebola Forecasting Challenge: model description and synthetic data generation. \emph{Epidemics}. 2018;22:3--12. doi:10.1016/j.epidem.2017.09.001.

\bibitem{bracher2021wis} Bracher J, Ray EL, Gneiting T, Reich NG. Evaluating epidemic forecasts in an interval format. \emph{PLOS Computational Biology}. 2021;17(2):e1008618. doi:10.1371/journal.pcbi.1008618.

\bibitem{byrd1995lbfgsb} Byrd RH, Lu P, Nocedal J, Zhu C. A limited memory algorithm for bound constrained optimization. \emph{SIAM Journal on Scientific Computing}. 1995;16(5):1190--1208. doi:10.1137/0916069.

\bibitem{cao2011robust} Cao J, Wang L, Xu J. Robust estimation for ordinary differential equation models. \emph{Biometrics}. 2011;67:1305--1313. doi:10.1111/j.1541-0420.2011.01577.x.

\bibitem{chowell2017primer} Chowell G. Fitting dynamic models to epidemic outbreaks with quantified uncertainty: a primer for parameter uncertainty, identifiability, and forecasts. \emph{Infectious Disease Modelling}. 2017;2(3):379--398. doi:10.1016/j.idm.2017.08.001.

\bibitem{chowell2016review} Chowell G, Sattenspiel L, Bansal S, Viboud C. Mathematical models to characterize early epidemic growth: a review. \emph{Physics of Life Reviews}. 2016;18:66--97. doi:10.1016/j.plrev.2016.07.005.

\bibitem{chowell2017perspectives} Chowell G, Viboud C, Simonsen L, Merler S, Vespignani A. Perspectives on model forecasts of the 2014--2015 Ebola epidemic in West Africa: lessons and the way forward. \emph{BMC Medicine}. 2017;15:42. doi:10.1186/s12916-017-0811-y.

\bibitem{chowell2019subepidemic} Chowell G, Tariq A, Hyman JM. A novel sub-epidemic modeling framework for short-term forecasting epidemic waves. \emph{BMC Medicine}. 2019;17:164. doi:10.1186/s12916-019-1406-6.

\bibitem{chretien2015modeling} Chretien JP, Riley S, George DB. Mathematical modeling of the West Africa Ebola epidemic. \emph{eLife}. 2015;4:e09186. doi:10.7554/eLife.09186.

\bibitem{czado2009predictive} Czado C, Gneiting T, Held L. Predictive model assessment for count data. \emph{Biometrics}. 2009;65(4):1254--1261. doi:10.1111/j.1541-0420.2009.01191.x.

\bibitem{efron1993bootstrap} Efron B, Tibshirani RJ. \emph{An Introduction to the Bootstrap}. New York: Chapman \& Hall/CRC; 1993.

\bibitem{gneiting2007proper} Gneiting T, Raftery AE. Strictly proper scoring rules, prediction, and estimation. \emph{Journal of the American Statistical Association}. 2007;102(477):359--378. doi:10.1198/016214506000001437.

\bibitem{hairer1996ode} Hairer E, Wanner G. \emph{Solving Ordinary Differential Equations II: Stiff and Differential-Algebraic Problems}. 2nd ed. Berlin: Springer; 1996.

\bibitem{huber2009robust} Huber PJ, Ronchetti EM. \emph{Robust Statistics}. 2nd ed. Hoboken, NJ: Wiley; 2009.

\bibitem{knight1998limiting} Knight K. Limiting distributions for L1 regression estimators under general conditions. \emph{The Annals of Statistics}. 1998;26(2):755--770. doi:10.1214/aos/1028144858.

\bibitem{koenker2005quantile} Koenker R. \emph{Quantile Regression}. Cambridge: Cambridge University Press; 2005.

\bibitem{kunsch1989jackknife} Kunsch HR. The jackknife and the bootstrap for general stationary observations. \emph{The Annals of Statistics}. 1989;17(3):1217--1241. doi:10.1214/aos/1176347265.

\bibitem{morris2019simulation} Morris TP, White IR, Crowther MJ. Using simulation studies to evaluate statistical methods. \emph{Statistics in Medicine}. 2019;38(11):2074--2102. doi:10.1002/sim.8086.

\bibitem{pell2018phenomenological} Pell B, Kuang Y, Viboud C, Chowell G. Using phenomenological models for forecasting the 2015 Ebola challenge. \emph{Epidemics}. 2018;22:62--70. doi:10.1016/j.epidem.2016.11.002.

\bibitem{qiu2016robust} Qiu Y, Hu T, Liang B, Cui H. Robust estimation of parameters in nonlinear ordinary differential equation models. \emph{Journal of Systems Science and Complexity}. 2016;29:41--60. doi:10.1007/s11424-015-4045-9.

\bibitem{roosa2019identifiability} Roosa K, Chowell G. Assessing parameter identifiability in compartmental dynamic models using a computational approach: application to infectious disease transmission models. \emph{Theoretical Biology and Medical Modelling}. 2019;16:1. doi:10.1186/s12976-018-0097-6.

\bibitem{smirnova2019nonparametric} Smirnova A, deCamp L, Chowell G. Forecasting epidemics through nonparametric estimation of time-dependent transmission rates using the SEIR model. \emph{Bulletin of Mathematical Biology}. 2019;81:4343--4365. doi:10.1007/s11538-017-0284-3.

\bibitem{soetaert2010desolve} Soetaert K, Petzoldt T, Setzer RW. Solving differential equations in R: package deSolve. \emph{Journal of Statistical Software}. 2010;33(9):1--25. doi:10.18637/jss.v033.i09.

\bibitem{viboud2018rapidd} Viboud C, Sun K, Gaffey R, Ajelli M, Fumanelli L, Merler S, et al. The RAPIDD Ebola Forecasting Challenge: synthesis and lessons learnt. \emph{Epidemics}. 2018;22:13--21. doi:10.1016/j.epidem.2017.08.002.

\bibitem{wu1986wild} Wu CFJ. Jackknife, bootstrap and other resampling methods in regression analysis. \emph{The Annals of Statistics}. 1986;14(4):1261--1295. doi:10.1214/aos/1176350142.
\end{thebibliography}
\end{document}